\def\draft{1}
\def\llncs{0}
\def\anon{0}
\def\notoc{0}
\newcommand{\remove}[1]{}
\newcommand{\ignore}[1]{}
\def\ShowAuthNotes{1}
\def\ShowAuthNotes{0}
\definecolor{DarkBlue}{RGB}{0,0,150}
\definecolor{llg}{gray}{0.95}
\definecolor{lg}{gray}{0.85}
\newtheorem{theorem}{Theorem}[section]
\newtheorem{proposition}[theorem]{Proposition}
\newtheorem{definition}[theorem]{Definition}
\newtheorem{claim}{Claim}[theorem]
\newtheorem{tlclaim}[theorem]{Claim}
\newtheorem{lemma}[theorem]{Lemma}
\newtheorem{corollary}[theorem]{Corollary}
\spnewtheorem{prop}{Property}{\bfseries}{\itshape}
\spnewtheorem{fact}{Fact}{\bfseries}{\itshape}
\spnewtheorem{subclaim}{Claim}[theorem]{\bfseries}{\itshape}
\spnewtheorem{tlclaim}[theorem]{Claim}{\bfseries}{\itshape}
\spnewtheorem{assumption}{Assumption}{\bfseries}{\itshape}
\def\pfend{\hfill\qedsymbol}
\def\pfend{}
\newenvironment{claimproof}{\begin{proof}

}{\pfend\end{proof}}
\def\cA{{\cal A}}
\def\cK{{\cal K}}
\def\cU{{\cal U}}
\def\bbC{{\mathbb C}}
\def\bbE{{\mathbb E}}
\def\bbN{{\mathbb N}}
\newcommand{\floor}[1]{\left\lfloor #1 \right\rfloor}
\def\binset{\{0,1\}}
\def\pmset{\{\pm 1\}}
\def\q2{\lfloor q/2 \rceil}
\newcommand{\abs}[1]{\left\vert {#1} \right\vert}
\newcommand{\norm}[1]{\left\| {#1} \right\|}
\def\poly{{\rm poly}}
\def\negl{{\rm negl}}
\newcommand{\adv}{\mathrm{Adv}}
\newcommand{\mx}[1]{\mathbf{{#1}}}
\newcommand{\Ex}{\mathop{\bbE}}
\newcommand{\boldpar}[1]{\vspace{3pt}\par\noindent\textbf{#1}}
\newcommand{\authnote}[3]{\textcolor{#3}{[{\footnotesize {\bf #1:} { {#2}}}]}}
\newcommand{\authnote}[3]{}
\newcommand{\absnewline}{\ifnum\llncs=1 \\ \fi}
\def\mgp[#1]{\mx{G}_{#1}}
\def\mgip[#1]{\mx{G}_{#1}^{-1}}
\def\mcit[#1]{\mci[#1]^T}
\def\mci[#1]{\mx{C}_{#1}}
\newcounter{hybridcount}
\newcounter{prevhybridcount}
\newcounter{nexthybridcount}
\newcommand{\ket}[1]{|{#1}\rangle}
\newcommand{\bra}[1]{\langle{#1}|}
\newcommand{\ketbra}[1]{\ket{{#1}}\bra{{#1}}}
\newcommand{\braket}[1]{{\langle {#1} \rangle}}
\def\kalpha{\ket{\alpha}}
\newcommand{\tr}{\mathop{\mathrm{Tr}}}
\newcommand{\TD}{\mathrm{TD}}
\newcommand{\magn}{\mathsf{Mag}}
\newcommand{\csv}{\mathsf{cv}}
\newcommand{\ncsv}{\overline{\csv}}
\newcommand{\cnum}{\abs{\csv}}
\newcommand{\edgep}{\mathsf{p}}
\def\piu{\Pi^{\star}}
\def\zon{{\binset^{n}}}
\def\adv{{\mathcal{A}}}
\newcommand{\uuntt}[1]{\cU^{\star}_{n,{#1}t}}
\def\unist{{\text{\sf uni}_{s,t}}}
\def\Nst{{N^{\underline{st}}}}
\newcommand{\alsup}[1]{\alpha^{({#1})}}
\newcommand{\alssup}[1]{{\alpha^*}^{({#1})}}
\newcommand{\vxsup}[1]{{x}{({#1})}}
\title{Real-Valued Somewhat-Pseudorandom Unitaries %

\author{
Zvika Brakerski
\thanks{Weizmann Institute of Science. Email: \texttt{\{zvika.brakerski,nir.magrafta\}@weizmann.ac.il}.
Supported by the Israel Science Foundation (Grant No.\ 3426/21), and by the Horizon Europe Research and Innovation Program via ERC Project ACQUA (Grant 101087742).}
\and
Nir Magrafta\addtocounter{footnote}{-1}\footnotemark
}
\date{}
\ifnum\llncs=1
\ifnum\anon=0
\thanks{For the most up-to-date version of this work, please refer to \url{https://arxiv.org/}.}
\fi
\fi
}
\author{Zvika Brakerski\thanks{%
		Supported by the Israel Science Foundation (Grant No.\ 3426/21), and by the Horizon Europe Research and Innovation Program via ERC Project ACQUA (Grant 101087742).}
}
\institute{Weizmann Institute of Science, Israel}
\date{}
\author{}
\author{Zvika Brakerski\thanks{Weizmann Institute of Science, Israel, \texttt{zvika.brakerski@weizmann.ac.il}. Supported by the Israel Science Foundation (Grant No.\ 3426/21), and by the European Union Horizon Europe Research and Innovation Program via ERC Project ACQUA (Grant 101087742).}
}
\date{\today}
\date{}
\renewcommand{\paragraph}{\boldpar}
\begin{document}

\maketitle

\begin{abstract}
	
We explore a very simple distribution of unitaries: random (binary) phase -- Hadamard -- random (binary) phase -- random computational-basis permutation.
We show that this distribution is statistically indistinguishable from random Haar unitaries for any polynomial set of orthogonal input states (in any basis) with polynomial multiplicity.
This shows that even though real-valued unitaries cannot be completely pseudorandom (Haug, Bharti, Koh, arXiv:2306.11677), we can still obtain some pseudorandom properties without giving up on the simplicity of a real-valued unitary.

Our analysis shows that an even simpler construction: applying a random (binary) phase followed by a random computational-basis permutation, would suffice, assuming that the input is orthogonal and \emph{flat} (that is, has high min-entropy when measured in the computational basis).

Using quantum-secure one-way functions (which imply quantum-secure pseudorandom functions and permutations), we obtain an efficient cryptographic instantiation of the above.

\end{abstract}

\ifnum\llncs=0
\ifnum\notoc=0

\newpage
\tableofcontents

\newpage
\fi
\fi

\section{Introduction} 
\label{section:introduction}
Pseudorandomness is one of the most fundamental notions in cryptography. Prominent examples include pseudorandom generators (PRG) \cite{haastad1999pseudorandom}, pseudorandom functions (PRF) \cite{goldreich1986construct_prf}, and pseudorandom permutations (PRP) \cite{luby1988construct_prp}, which play a crucial role in various constructions in cryptography and beyond. Let us consider the concept of PRP, which is quite analogous to the object at the focus of this work. If we consider the class of all permutations $\binset^n \to \binset^n$, a random function from this class requires an exponential number of random bits to specify, and requires an exponential-size circuit to evaluate (and invert). A PRP is a distribution that can be sampled using a \emph{polynomial} number of bits, known as the \emph{seed}.\footnote{In a formal definition, one has to address the subtlety of whether ``efficiency'' is defined with respect to the input length $n$, or with respect to some ``security parameter''. This distinction does not matter for our current discussion, and we will point out when it does down the line.} Furthermore, given the seed $s$, it is possible to evaluate and invert the associated permutation $\pi_s$ in polynomial time. The crucial point is that any polynomial time process cannot distinguish between an interaction with a permutation $\pi_s$ for a random seed $s$, and an interaction with a completely random permutation. It has been established \cite{haastad1999pseudorandom,goldreich1986construct_prf, luby1988construct_prp} that PRG, PRF and PRP can all be constructed given the existence of one-way functions: the most basic (classical) cryptographic primitive. Furthermore, this connection is true even if we consider a quantum (polynomial-time) adversary \cite{zhandry2012construct,zhandry2016note}.

In the context of quantum computing and quantum cryptography, Ji Liu and Song \cite{ji2018pseudorandom} (henceforth JLS) proposed to study pseudorandomness of quantum objects. In particular, the defined the notion of \emph{pseudorandom quantum states} (PRS) which are $n$-qubit states which are (indistinguishable from) Haar random $n$-qubit states, even with arbitrary polynomial-time interaction with a polynomial number of copies of the pseudorandom state. The notion of PRS has been the subject of extensive study since \cite{brakerski2019pseudo, brakerski2020scalable,brakerski2022computational ,ananth23prfs, behera2023pseudorandomness, giurgica2023pseudorandomness, jeronimo2023subset}.

Another object proposed by JLS is that of pseudorandom unitaries (PRU).
Similarly to PRP, these should allow to evaluate and invert a unitary given a seed of polynomially many random bits,
while being computationally indistinguishable from a random Haar unitary given arbitrary polynomial time interaction. Contrary to PRS, JLS presented constructions of a PRU, but were unable to prove their security.
To date, it is still unknown how to construct PRU with a security proof under any known cryptographic assumption.

Recently, some partial progress has been made towards a construction of PRU. Namely, several works introduced families of unitaries with polynomial seeds and efficient evaluation, but falling short on pdeudorandomness. Lu, Qin, Song, Yao, and Zhao \cite{lu2023prss} introduced the notion of Pseudorandom State Scramblers (PRSS), which are unitaries that are only proven to act pseudorandomly on an arbitrary \emph{single input state} with arbitrary polynomial multiplicity. Namely, any number of copies of the output on one particular input are pseudorandom. Ananth, Gulati, Kaleoglu, and Lin \cite{ananth2023pseudorandom_isometries} introduced the notion of Pseudorandom Isometries (PRI), which are not unitary since their output is longer than their input (and the security of the constructions hinges on this property). They proved security of the PRI property for the case of a single input, a polynomial number of Haar-random inputs, or for an inputs which are a subset of computational basis elements (all with arbitrary polynomial multiplicity). The works of \cite{lu2023prss,ananth2023pseudorandom_isometries} mention a number of applications for the primitives that they defined, including multi-copy security for quantum public-key encryption.

Interestingly, the constructions in \cite{lu2023prss,ananth2023pseudorandom_isometries} are \emph{real-valued}. Namely, the unitary family consists of unitaries with only real values.\footnote{In \cite{lu2023prss} there is an additional construction which uses complex unitaries, but for the PRSS property, a real valued construction suffices.} In contrast, Haug, Bharti, and Koh \cite{haug2023pseudorandom} showed that \emph{full PRU security} cannot be achieved in this way. This is done by observing that if $U$ is real valued, then $U\otimes U$ acts as identity on the maximally entangled state, whereas this is very far from being the case if $U$ is a random unitary. Therefore, if we consider adversaries that are allowed to make entangled queries to the unitary, then it is impossible to construct real-valued pseudorandom unitaries. Indeed, very recently, and concurrently and independently of our work, Metger, Poremba, Sinha, and Yuen \cite{metger2024pseudorandom} showed that it is possible to construct \emph{non-adaptive PRU} that are pseudorandom with respect to any set of inputs that cannot change adaptively throughout the querying process, even when those inputs are entangled with each other and/or the environment. Indeed, their construction is inherently complex-valued. The question, therefore, remains:
\begin{center}
	\emph{What pseudorandom properties can be shown for real-valued efficiently computable unitaries?}
\end{center}
In this work, we show that it is possible to achieve stronger security notions than \cite{lu2023prss,ananth2023pseudorandom_isometries} using an extremely simple construction.

\subsection{Our Results}
We consider an extremely simple family of unitaries: $U_{P} U_{G} H^{\otimes n}  U_{F}$, where for functions $F,G: \binset^n \to \{\pm 1\}$, the operators $U_{F}, U_{G}$ are the unitaries $\ket{x} \to {F(x)}\ket{x}, \ket{x} \to {G(x)}\ket{x}$, and for a permutation $P: \binset^n \to \binset^n$, $U_{P}$ is the unitary $\ket{x} \to \ket{P(x)}$. The functions $F,G$ in our construction are quantum-secure pseudorandom functions, and the permutation $P$ is a quantum-secure pseudorandom permutation. 
Using the security of $F,G$ and $P$, we can replace them with truly random counterparts $f, g$ and $\pi$. We then analyze the output of the construction information theoretically with $f,g$ and $\pi$.
We note that this construction is in the spirit of, but simpler than, the PRU candidates considered by \cite{ji2018pseudorandom}.

We show that our family of unitaries acts as a PRU so long as 
the inputs are (a mixture of) an \emph{orthogonal} set of  quantum states, with arbitrary polynomial multiplicity each. This in particular shows that this construction is also a PRSS. Our construction also generalizes the properties proven by \cite{ananth2023pseudorandom_isometries} for PRI, without increasing the output size and using a construction of comparable complexity.

Notably, our construction can be separated into two parts, each of which is interesting in its own right. First, we show that $H^{\otimes n}  U_{f}$ is a ``state-flattener'', in the sense that for any polynomial size set of input states, it holds that with overwhelming probability over a truly random function $f$, the output states are all ``almost perfectly flat'' in the computational basis. Namely, the square-magnitude of each computational basis element is bounded by $\epsilon = O(\frac{n}{2^n})$ (note that $1/2^n$ is the maximum possible flatness, and Haar random states are also expected to have $\sim \frac{n}{2^n}$ flatness). This property follows immediately from known concentration bounds, but we believe that it was not explicitly pointed out in this context. So, if we only want to approximate the flattening property of PRU, it can be done almost trivially.

We then show that the second part of our construction, $U_{\pi} U_{g}$ for random $\pi, g$, acts as PRU for flat orthogonal inputs. Again, this is an extremely simple construction that can be applied even as-is for a non-trivial set of input states (e.g.\ some polynomial subset of a random basis for the given Hilbert space). The technical crux of our paper is in the analysis of this component.

\subsection{Technical Overview} \label{subsection:technical_overview}

We provide an overview of the proof for our main information theoretic lemma. That is, taking $f,g,\pi$ to be random functions and a permutation, then our construction is statistically indistinguishable from a Haar random unitary for orthogonal inputs.

We use an (approximate) characterization of the output of querying a Haar random unitary on orthogonal input states. For $t$ copies of $s$ different orthogonal inputs, the output ``target'' state can be approximated by the density matrix
\begin{align}\label{eq:intro:targetstate}
	\rho_{\text{target}}=\sum_{{z , \sigma}} \ket{z} \bra{\sigma(z)}~,
\end{align}
ignoring global normalization. The summation is over all $z \in (\binset^n)^{st}$ whose $st$ entries are unique elements in $\zon$, and over a set of permutations $\sigma$ over the set $[st]$ (or, equivalently, over $[t]^s$). Namely, the permutation $\sigma$ takes a vector $z \in (\binset^n)^{st}$ and permutes its entries (the vector $\sigma(z)$ has the same set of entries as $z$, only in a different order). Specifically, the summation is only over ``block-preserving'' permutations, which are permutations that only swap elements inside each $t$-tuples of elements. That is, a permutation is block-preserving if it can be represented as a sequence of $s$ permutations over $[t]$. %

We then prove that the output of our construction, with random functions and permutation, is close to the state in Eq.~\eqref{eq:intro:targetstate}, thereby showing that on our set of input states, our construction is statistically indistinguishable from a Haar random unitary.

In this overview we first explain how to prove the flatness property for the first part of our construction, and then consider the second part of our construction. The former is described in Section~\ref{sec:tech:flat}. For the latter, we first explain in Section~\ref{sec:tech:sym} how to ``clean up'' the state by removing cross-terms and certain ``asymmetric'' terms. This part is similar in spirit to what is done in previous works (although we present a more general analysis that is based only on flatness and not on specific properties of the input state). Then, we are left with the most technically involved part which is to analyze bound the trace norm of the difference between our state (call it $\rho_{\text{sym}}$) and the state $\rho_{\text{target}}$ above. This is explained in Section~\ref{sec:tech:diff}.

\subsubsection{Flattening}
\label{sec:tech:flat}

Recall that we consider the unitary distribution $H^{\otimes n}U_f$, where $f$ is a random function (i.e.\ a random binary phase followed by Hadamard on all qubits). We say that a vector is $\epsilon$-flat if the square-absolute-value of each of its (standard basis) coefficients is bounded by $\epsilon$. 

Given an input state of the form $\beta = \sum_x \beta_x \ket{x}$, we consider $\gamma_y = \bra{y}H^{\otimes n}U_f\ket{\beta}$, which is the amplitude of the standard basis element $\ket{y}$ in the vector $H^{\otimes n}U_f\ket{\beta}$. This value can be expressed as an exponential sum $\gamma_y = \frac{1}{2^{n/2}}\sum_x f(x) (-1)^{x\cdot y} \beta_x$. We interpret each summand as a random variable with zero mean, since $f$ is a random function to $\{\pm 1\}$. This means that we have a sum of exponentially many independent zero-mean random variables, and furthermore the $\ell_2$ norm of the vector of summands is bounded since $\sum_x |\beta_x|^2 = 1$. This means that the sum is very strongly concentrated around $0$, and indeed using Hoeffding, the square-absolute-value will be at most $\frac{cn}{2^n}$ with all but an exponentially small probability. By applying the union bound, we get that for any a-priori polynomial-size set of input vectors, and for any coefficient $\gamma_y$ of any of these vectors, it holds with all but exponentially small probability that they are all bounded by $\frac{cn}{2^n}$ in square-absolute-value. We note that since we consider complex vectors, the actual analysis separates $\beta$ into its real and imaginary part, and analyze each separately.

From this point and on, we analyze the remainder of our construction under the assumption that the input quantum states are $\frac{cn}{2^n}$-flat.

\subsubsection{Cross-Term Removal and Symmetrization}
\label{sec:tech:sym}

We consider the application of $(U_\pi U_g)^{\otimes st}$ on an input state consisting of $s$ blocks, each of which contains $t$ copies of the same (flat) state, where the vectors in the different blocks are orthogonal. Our goal is to show that, up to normalization, the output state can be expressed as 
\begin{align}\label{eq:tech:symstate}
	\rho_{\text{sym}} = \sum_{z, \sigma} \nu_\sigma \ket{z}\bra{\sigma(z)}~,
\end{align}
where $z$ is summed as in Eq.~\eqref{eq:intro:targetstate}, $\sigma$ ranges over \emph{all} permutations of $[st]$ (not only block-preserving ones), and $\nu_\sigma$ is a term that will be explained below.

As mentioned above, the techniques here are fairly standard in the analysis of pseudorandom states and other objects \cite{ananth23prfs}. However, our analysis relies only on the general notion of flatness and not on the specific expression for the coefficients of the state.

We start by using the flatness of the states. Let $\Pi^\star$ be the projector to the vectors of length $st$ of strings in $\zon$ with unique entries, that is, no entry reoccurs. Then our input state is close to its $\Pi^\star$ projection, since the collision probability in the standard basis of two entries is small due to flatness. Therefore, we may consider an input state whose density matrix supported only over entries $\ket{z}\bra{z'}$, where both $z,z'$ are unique $st$-tuples of elements from $\zon$.

We first apply $U_g^{\otimes st}$, which has the effect of zeroing out the coefficients of $\ket{z}\bra{z'}$ not of the form $\ket{z}\bra{\sigma(z)}$. This is the result of $\Ex_g \left[\prod_i g(z_i) \prod_i g(z'_i)\right]$ being zero for all $z, z'$ which do not have the same entry-histogram since we average over random $g$'s.

Finally, applying $U_\pi^{\otimes st}$ means that the coefficient of $\ket{z}\bra{\sigma(z)}$ becomes independent of $z$, and depends only on $\sigma$. This follows from taking the expectation over $\pi$, which averages the coefficients $\ket{z}\bra{\sigma(z)}$ for all unique entries $z$ (since $\pi$ is a random permutation). We denote the coefficient corresponding to $\sigma$ by $\nu_\sigma$.

\subsubsection{Bounding The Difference}
\label{sec:tech:diff}

The difference between Eq.~\eqref{eq:intro:targetstate} and Eq.~\eqref{eq:tech:symstate} is two-fold. First, the target state only sums over block-preserving permutations, whereas the symmetrized state sums over all permutations. Second, the target state gives the same weight to all terms $\ket{z}\bra{\sigma(z)}$ that it ranges over, whereas the symmetrized state may give different weights to different permutations.

Our crucial observation here, is to notice that if it is possible to go from a permutation $\sigma$ to a permutation $\sigma'$ by performing block-preserving operations, then $\sigma$ and $\sigma'$ have the same coefficient $\nu_\sigma$. This is the case since the input state corresponds to $s$ blocks where each block contains $t$ identical states. Therefore, the state of the system should be invariant under block-preserving permutations, which is manifested in the corresponding terms $\nu_\sigma$ being equal. We may therefore define congruence classes of permutations that differ only by block-preserving operations, and associate the coefficients with the congruence class rather than a specific permutation.

Consider for every congruence class of permutations $\edgep$ the operator $A_\edgep = \sum_{\sigma \in \edgep} \sum_{z} \ket{z}\bra{\sigma(z)}$. Then we can rewrite $\rho_{\text{sym}} =\sum_\edgep \nu_\edgep A_\edgep$. We note that the set of all block-reserving permutations consists of a single congruence class. Therefore, all block-preserving permutations receive the same weight, as required. The remaining goal is to show that the classes that correspond to permutations that are not block-preserving are (jointly) negligible in $\ell_1$ weight.

We let $k$ be the number of ``crossings'' of a congruence class. That is, the number of inputs whose output belongs to a different block. The crossing number, in some sense, represents the amount of deviation from being block-preserving. Denote by $\edgep_k$ the set of congruence classes with $k$ crossings. Note that there is a single congruence class with zero crossings, and it corresponds to the aforementioned set of block-preserving permutations. Our goal therefore is to bound the trace norm of $\sum_{k > 1} \sum_{\edgep \in \edgep_k} \nu_\edgep A_\edgep$.\footnote{We notice that $k=1$ is not possible for reasons of symmetry and therefore it does not appear in the sum, but we could have achieved our result even without this minor optimization.}

The argument here contains three parts:
\begin{enumerate}
	\item We show that for combinatorial reasons, $\norm{A_\edgep}$ grows with $\poly(nst)^k$ (up to a global normalization factor). Essentially, we show that this norm is related to the number of permutations in $\edgep$ (which due to symmetry is the same in all $\edgep$ with the same $k$).
	
	\item We show, again by a combinatorial argument, that the number of congruence classes with the same $k$ also grows as $\poly(nst)^k$.
	
	\item Perhaps the most technically involved part is to show that $\nu_\edgep$ decays, up to a global normalization factor, with $\delta^k$ for a (negligible) factor $\delta = \frac{\poly(nst)}{2^n}$. This is achieved by noticing that $\nu_\edgep$ contains an ``inner product'' term for every crossing edge. This term would ideally correspond to an inner product between two input vectors, and since these are orthogonal we would expect this value to be $0$. However, the inner product is ``disturbed'' because permutations do not allow recurrence, which in turn creates dependence between the would-be inner products. We therefore need to come up with a fairly involved technical argument to show that whereas the value $\nu_\edgep$ is not exactly $0$, each of the would-be inner products contributes a $\delta$ factor, resulting in an exponential decay.
\end{enumerate}

Other factors cancel out perfectly, since they represent the same combinatorial reality, and indeed when putting-together all of the above, we get sum of the form $\sum_{k > 1} \left(\frac{\poly(nst)}{2^n}\right)^k$, which converges to a negligible value as required, bearing in mind that $s,t = \poly(n)$.

\subsection{Other Previous Works}

Alagic, Majenz, and Russell \cite{alagic2020efficient} considered a \emph{stateful} variant of PRU, where the adversary is interacting with a \emph{stateful simulator} whose internal state may change and grow as the experiment proceeds. Even under this relaxed notion, they were only able to construct PRU using a simulator in (quantum) PSPACE. However, in this variant one can hope to achieve \emph{statistical} (or even perfect) security rather than just computational.

\subsection{Future Directions}

Contrary to the constructions in \cite{lu2023prss}, ours is not \emph{scalable} (a notion introduced in \cite{brakerski2020scalable}). That is, we consider adversaries whose running time is polynomial in the input size of the unitary $n$. In contrast, in a scalable construction, one specifies separately the parameters for the adversary's complexity and for the input size. Scalable constructions are usually more involved, and in particularly require more computational depth, than non-scalable ones. It remains an open question to find a scalable version of our construction, or to prove lower bounds in this vein.

Our work shows a fairly strong notion of pseudorandomness for PRU that can be achieved using a straightforward real-valued construction. One may further wonder whether it is possible to get even closer to full-fledged PRU using constructions like ours (or even our construction as-is). For example, the negative result of \cite{haug2023pseudorandom} does not seem to exclude real-valued PRU that are applicable to \emph{tensor-product} inputs. Namely, inputs that are not necessarily orthogonal, but are not entangled with each other (recall that entangled queries stand at the core of the \cite{haug2023pseudorandom} separation). We believe that answering some of these questions may be within reach, but were hurried to report our current progress due to the recent announcement of \cite{metger2024pseudorandom}.

\subsection{Paper Organization}
Section~\ref{section:Preliminaries} introduces notations, defines quantum-secure pseudorandom functions and permutations, and references the notion of almost invariancy under Haar random unitaries. Section~\ref{section:Somewhat-pseudorandom-unitaries} provides the security definition for non-adaptive orthogonal-inputs pseudorandom unitaries, presents the main theorem, describes the construction and reduce it to the information theoretic version. Section~\ref{section:analysis} contains the technical contributions. It starts with the main information theoretic lemma, continues with the analysis of the two steps of the construction separately, and concludes with proving the main lemma and theorem.

\subsection*{Acknowledgments}

We thank Omri Shmueli for multiple contributions to the results presented in this work. We would also like to thank Yanglin Hu and Marco Patrick Tomamichel for pointing out a gap in a proof in the previous version of the manuscript.

\section{Preliminaries} \label{section:Preliminaries}

\subsection{Notation}
We denote $N = 2^n$ and $\Nst = \binom{N}{st}(st)! = N(N-1)\cdots (N-st+1)$. 
We denote the trace norm of $A$ by $\|A\|_1 = \tr (\sqrt{A A^\dagger})$, which is the sum of the singular values.

\paragraph{Vectors, Functions, and Permutations.} Let $s, t \in \bbN$. Throughout our analysis we will consider vectors of bit-strings. Formally, an object of the form $\vec{y} \in (\binset^n)^{st}$, which indicates a $t$-length vector of $n$-bit strings. We will denote $\vec{y} = (y_1, \ldots, y_{st})$, where each coordinate in the vector is a bit-string $y_i \in \binset^n$. 
Let $T$ be some set. We denote by $\uuntt{s}$ the set of all length $st$ vectors of \emph{unique} elements from $\zon$, that is, no entry reoccurs.

We will consider a number of types of operations on such vectors. For any function $f: \binset^n \to D$, where $D$ is some domain, we let $f(\vec{y}) \in D^t$ denote the pointwise application of $f$ on each coordinate in $\vec{y}$ individually. For functions with complex range $\alpha: \binset^n \to \bbC$, we also define multiplicative notation, where we use the notation $\alpha_x = \alpha(x)$, and $\alpha_{\vec{x}} = \prod_{i} \alpha_{x_i}$. Note that the coefficients of quantum states constitute such functions.

For the special case of a function on the coordinates which is a permutation, we will denote it by $\pi: \binset^n \to \binset^n$. We call such a permutation an \emph{inner permutation} since it permutes each element of $\vec{y}$ individually. We also consider an \emph{outer permutation} (or index permutation) $\sigma \in S_{st}$ which permutes $[st] = \{1, \ldots, st\}$. We will abuse the notation and think of $\sigma \in S_{st}$ also as $\sigma \in S_{[s] \times [t]}$ which takes a tuple input $(j, i), j\in [s], i\in [t]$ and outputs the tuple $(j', i')$ s.t.\ $j' = \floor{\sigma(sj + i) / s}, i' = \sigma(sj + i) \mod t$. An outer permutation permutes the indices of the elements of $\vec{y}$, i.e.\ $\vec{z} = \sigma(\vec{y}) \in (\binset^n)^{st}$ is such that $z_{j, i} = y_{\sigma(j, i)}$. We will also consider the subgroup  $S_t^s = S_t \times \cdots \times S_t$ ($s$ times) of outer permutations.

\subsection{Concentration Bounds}
\begin{theorem} [Hoeffding's Inequality]
	Let $Z_x$, $x\in \zon$,  be independent random variables with zero expectation such that  $a_x \le Z_x \le b_x$ with probability $1$. Then for all $\epsilon>0$,
	\begin{align}
		\Pr\left[\sum_{x\in \zon} Z_x \ge \epsilon\right] \le \exp{\left(\frac{-2\epsilon^2}{\sum_i (b_i-a_i)^2}\right)} 
		~.
	\end{align}
\end{theorem}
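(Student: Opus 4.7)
The plan is to prove Hoeffding's inequality via the standard Chernoff-style moment generating function method. First I would introduce a free parameter $\lambda > 0$ and apply Markov's inequality to the exponentiated random variable:
\begin{align*}
\Pr\Bigl[\sum_x Z_x \ge \epsilon\Bigr] = \Pr\Bigl[e^{\lambda \sum_x Z_x} \ge e^{\lambda \epsilon}\Bigr] \le e^{-\lambda \epsilon}\, \mathbb{E}\Bigl[e^{\lambda \sum_x Z_x}\Bigr].
\end{align*}
By independence of the $Z_x$'s, the right-hand expectation factors as $\prod_x \mathbb{E}[e^{\lambda Z_x}]$, reducing the problem to a per-coordinate moment generating function bound.

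Next I would establish Hoeffding's lemma: for any zero-mean random variable $Z$ supported on $[a,b]$ and any $\lambda \in \mathbb{R}$, $\mathbb{E}[e^{\lambda Z}] \le \exp(\lambda^2 (b-a)^2 / 8)$. The argument uses convexity of $u \mapsto e^{\lambda u}$ on $[a,b]$ to upper-bound $e^{\lambda Z}$ by the linear interpolation between $(a, e^{\lambda a})$ and $(b, e^{\lambda b})$; taking expectations eliminates the linear part since $\mathbb{E}[Z] = 0$, leaving a function of $\lambda (b-a)$ alone, which is then bounded by expanding its logarithm in a Taylor series and observing that the second derivative of the log is at most $1/4$.

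Combining the two steps, I would obtain
\begin{align*}
\Pr\Bigl[\sum_x Z_x \ge \epsilon\Bigr] \le \exp\!\Bigl(-\lambda \epsilon + \tfrac{\lambda^2}{8}\sum_x (b_x - a_x)^2\Bigr),
\end{align*}
and then optimize over $\lambda > 0$. The quadratic-in-$\lambda$ exponent is minimized at $\lambda^\star = 4\epsilon / \sum_x (b_x - a_x)^2$, and plugging this back yields the desired bound $\exp\!\bigl(-2\epsilon^2 / \sum_x (b_x - a_x)^2\bigr)$.

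The only genuinely nontrivial step is Hoeffding's lemma itself; everything else is bookkeeping. The main obstacle there is getting the constant $1/8$ (which leads to the $2$ in the final exponent) rather than a weaker constant. An alternative route that sidesteps the Taylor-series estimate is a symmetrization argument that introduces an independent copy $Z'$ and uses $\mathbb{E}[e^{\lambda(Z - Z')}] \ge \mathbb{E}[e^{\lambda Z}]$, but the convexity-plus-Taylor approach is the cleanest for a self-contained proof.
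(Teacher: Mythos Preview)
Your proof is correct and follows the standard Chernoff--Hoeffding argument. The paper does not actually prove this statement: it is listed as a standard preliminary (Hoeffding's Inequality) in the concentration-bounds subsection without proof, so there is nothing to compare against.
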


\subsection{Pseudorandomness}
Zhandry proved that given quantum-secure one-way functions, we can construct quantum-secure pseudorandom functions \cite{zhandry2012construct} and pseudorandom permutations \cite{zhandry2016note}, which we use in our construction.

\begin{definition}[Quantum-Secure Pseudorandom Function] Let $\mathcal{K}$ be a key space. A keyed family of functions $\{F_k: \zon \rightarrow \{\pm 1\}\}_{k\in \mathcal{K}}$ is a quantum-secure pseudorandom function (QPRF) if for any (non-uniform) quantum polynomial-time (QPT) oracle algorithm $\adv$, $F_k$ with a random $k \leftarrow \mathcal{K}$ is indistinguishable from a truly random function $f$ in the sense that
	$$
	\left|
	\Pr_k [\cA^{F_k} (1^n)=1] - \Pr_f   [\cA^{f}(1^n)=1]
	\right| = \negl(n).
	$$
	In addition, $F_k$ is polynomial-time computable on a classical
	computer.
\end{definition}

\begin{definition}[Quantum-Secure Pseudorandom Permutation] Let $\mathcal{K}$ be a key space. A keyed family of permutations $\{P_k: \zon \rightarrow \zon\}_{k\in \cK}$ is a quantum-secure pseudorandom permutation (QPRP) if for any (non-uniform) QPT oracle algorithm $\cA$, $P_k$ with a random $k \leftarrow \cK$ is indistinguishable from a truly random permutation $\pi$ in the sense that
	$$
	\left|
	\Pr_k [\cA^{P_k, P_k^{-1}} (1^n)=1] - \Pr_\pi [\cA^{\pi,\pi^{-1}}(1^n)=1]
	\right| = \negl(n).
	$$
	In addition, $P_k$ is polynomial-time computable on a classical
	computer.
\end{definition}

\subsection{Almost Invariance Under Haar Unitaries}

The following definition, claim and lemma are taken from \cite{ananth2023pseudorandom_isometries}.

\begin{definition} \label{def:almost-inv}
	Let $n, q, \ell \in \bbN$. An $(\ell + n\cdot q)$-qubit state $\rho$ is $\epsilon$-almost invariant under $q$-fold Haar unitary if
	\begin{align}
		\TD \left(
		\rho, \Ex_{U\leftarrow Haar_n} \left[ (I_\ell \otimes U^{\otimes q}) \rho (I_\ell \otimes (U^\dagger)^{\otimes q}) \right]
		\right)
		\le
		\epsilon
	\end{align}
\end{definition}

\begin{tlclaim} \label{claim:channel-almost-inv}
	Let $n, q, \ell \in \bbN$. Suppose $\Phi$ is a quantum channel that is a probabilistic mixture of unitaries on $(\ell + n \cdot q)$ qubits. More precisely, 
	$
	\Phi(\rho) \coloneqq \Ex_{k\leftarrow \mathcal{D}} \left[ (I_\ell \otimes V_k^{\otimes q}) \rho (I_\ell \otimes (V_k^\dagger)^{\otimes q}) \right]
	$
	where $\mathcal{D}$ a distribution over $\binset^*$, and $V_k: \bbC^{2^n} \rightarrow \bbC^{2^n}$ is a unitary for every $k$ in the support of $D$. 
	
	Suppose for a $(\ell + n \cdot q)$-qubit state $\rho$, $\Phi(\rho)$ is $\epsilon$-almost invariant under $q$-fold Haar unitary, where $\epsilon$ is a negligible function, then the following holds:
	
	\begin{align}
		\TD \left(
		\Phi(\rho),
		\Ex_{U\leftarrow Haar_n} \left[ (I_\ell \otimes U^{\otimes q}) \rho (I_\ell \otimes (U^\dagger)^{\otimes q}) \right]
		\right)
		\le 
		\epsilon
	\end{align}
\end{tlclaim}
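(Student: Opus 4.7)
The plan is to show that the Haar twirl $\tau \mapsto \Ex_U[(I_\ell \otimes U^{\otimes q}) \tau (I_\ell \otimes (U^\dagger)^{\otimes q})]$ absorbs any inner unitary of the form $I_\ell \otimes V^{\otimes q}$, and therefore yields the same output whether we apply it to $\Phi(\rho)$ or to $\rho$ directly. Once this equality is in hand, the hypothesis that $\Phi(\rho)$ is $\epsilon$-almost invariant will immediately bound by $\epsilon$ the trace distance between $\Phi(\rho)$ and the Haar twirl of $\rho$, which is precisely what the claim asserts.

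Concretely, I would first substitute the definition of $\Phi$ inside the Haar twirl and interchange the expectations over $U$ and $k$:
\begin{align*}
\Ex_U \left[ (I_\ell \otimes U^{\otimes q}) \Phi(\rho) (I_\ell \otimes (U^\dagger)^{\otimes q}) \right]
= \Ex_{k \leftarrow \mathcal{D}} \Ex_U \left[ (I_\ell \otimes (U V_k)^{\otimes q}) \rho (I_\ell \otimes ((U V_k)^\dagger)^{\otimes q}) \right].
\end{align*}
By the left-translation invariance of the Haar measure, for each fixed $V_k$ the random unitary $U V_k$ has the same distribution as $U$, so the inner expectation becomes $\Ex_U[(I_\ell \otimes U^{\otimes q}) \rho (I_\ell \otimes (U^\dagger)^{\otimes q})]$, which no longer depends on $k$. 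The outer expectation over $k$ therefore disappears, yielding
\begin{align*}
\Ex_U \left[ (I_\ell \otimes U^{\otimes q}) \Phi(\rho) (I_\ell \otimes (U^\dagger)^{\otimes q}) \right]
= \Ex_U \left[ (I_\ell \otimes U^{\otimes q}) \rho (I_\ell \otimes (U^\dagger)^{\otimes q}) \right].
\end{align*}

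To finish, I would apply Definition~\ref{def:almost-inv} directly to $\Phi(\rho)$, which by hypothesis is $\epsilon$-almost invariant:
\[ \TD\!\left( \Phi(\rho),\ \Ex_U \left[ (I_\ell \otimes U^{\otimes q}) \Phi(\rho) (I_\ell \otimes (U^\dagger)^{\otimes q}) \right] \right) \le \epsilon, \]
and then substitute the identity proved above on the right-hand side to conclude the claim. I do not anticipate a real technical obstacle here; the argument is entirely algebraic and the only things to verify are the swap of expectations and the Haar left-invariance step, both of which are immediate because $\mathcal{D}$ is a standard distribution over bit-strings and all operators act on a fixed finite-dimensional Hilbert space.
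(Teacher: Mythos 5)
Your argument is correct and is the standard proof of this fact. The key observation — that the Haar twirl absorbs the inner mixture of unitaries because $UV_k$ has the same distribution as $U$ for each fixed $V_k$, so $\Ex_U[(I_\ell \otimes U^{\otimes q}) \Phi(\rho) (I_\ell \otimes (U^\dagger)^{\otimes q})] = \Ex_U[(I_\ell \otimes U^{\otimes q}) \rho (I_\ell \otimes (U^\dagger)^{\otimes q})]$ — is exactly what is needed, and once you have that equality the claim is an immediate substitution into the almost-invariance hypothesis. Note that the paper itself does not supply a proof for this claim: it is imported verbatim from \cite{ananth2023pseudorandom_isometries}, whose argument is the same as yours. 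One small terminological point: since you multiply by $V_k$ on the right (forming $UV_k$), the property you are invoking is usually called right-invariance of the Haar measure, not left-translation invariance; for the compact unitary group both hold, so this does not affect the validity of the argument, but it is worth stating accurately.
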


\begin{lemma} \label{lem:almost-inv-instance}
	Let $n, s, t \in \bbN$, and define
	\begin{align}
		\rho_{\unist} \coloneqq \frac{1}{\Nst} \sum_{\substack{z\in \uuntt{s} \\ \sigma \in S_t^s}} \ket{z} \bra{\sigma(z)}
		~,
	\end{align}
	then for any $\ell$ qubit state $\rho_\ell$, $\rho_\ell \otimes \rho_{\unist}$ is $O(s^2t^2 / 2^n)$-almost invariant under $st$-fold Haar unitary ($I_\ell$ being applied on $\rho_\ell$).
\end{lemma}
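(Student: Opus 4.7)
The plan is to use Schur--Weyl / Weingarten calculus to evaluate the Haar twirl of $\rho_{\unist}$ exactly and then bound the trace-norm error. Since $I_\ell$ acts trivially, $\rho_\ell$ tensors through and it suffices to show $\|\rho_{\unist} - \Ex_U[U^{\otimes st}\rho_{\unist}(U^\dagger)^{\otimes st}]\|_1 = O(s^2t^2/2^n)$. First, I would reformulate the state as $\rho_{\unist} = \frac{(t!)^s}{\Nst}\,\Pi^\star \bigotimes_{j=1}^s \Pi_{\mathrm{sym}}^{(j)}$, where $\Pi^\star := \sum_{z\in\uuntt{s}}\ketbra{z}$ projects onto globally-unique tuples and $\Pi_{\mathrm{sym}}^{(j)}$ is the symmetric-subspace projector on the $j$-th $t$-block; the two projectors commute because $P_\sigma$ preserves the unique-tuple subspace. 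Equivalently, $\rho_{\unist} = \Ex_\pi[U_\pi^{\otimes st}\ketbra{\Phi}U_\pi^{\dagger\otimes st}]$ for a uniformly random permutation $\pi$ on $\zon$ and the canonical block-symmetric pure state $\ket{\Phi} = \bigotimes_j \ket{\mathrm{sym}(B_j^0)}$ with $B_j^0 = \{(j-1)t+1,\ldots,jt\}$. By the translation invariance of the Haar measure, the twirl of $\rho_{\unist}$ coincides with the twirl of the single pure state:
\[
\Ex_U\bigl[U^{\otimes st}\rho_{\unist}(U^\dagger)^{\otimes st}\bigr] \;=\; \Ex_U\bigl[U^{\otimes st}\ketbra{\Phi}U^{\dagger \otimes st}\bigr] \;=:\; \tilde\rho.
\]

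Next, I would compute $\tilde\rho$ via the Weingarten formula. The commutant of $U^{\otimes st}$ is spanned by permutation operators $\{P_\sigma\}_{\sigma \in S_{st}}$, and a short calculation gives $\bra{\Phi}P_\tau\ket{\Phi} = [\tau \in S_t^s]$ (the unique-entry, block-symmetric structure of $\ket{\Phi}$ annihilates all non-block-preserving overlaps). Hence
\[
\tilde\rho \;=\; \sum_{\sigma \in S_{st}} \Bigl(\sum_{\tau \in S_t^s} W_g(\sigma^{-1}\tau,\,N)\Bigr) P_\sigma,
\]
where $W_g$ is the Weingarten function with leading behavior $W_g(e,N) = \tfrac{1}{N^{st}}(1+O(s^2t^2/N))$ and $W_g(\pi,N) = O(N^{-st-k})$ for $\pi$ at transposition-distance $k$ from the identity. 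I would then split the sum into block-preserving ($\sigma \in S_t^s$) and non-block-preserving ($\sigma \notin S_t^s$) parts. On the block-preserving part, the leading Weingarten coefficient $\tfrac{1}{N^{st}}$ matches the coefficient $\tfrac{1}{\Nst}=\tfrac{1}{N^{st}}(1+O(s^2t^2/N))$ of the corresponding terms in $\rho_{\unist}$; the residual discrepancy is a small multiple of $\bigotimes_j \Pi_{\mathrm{sym}}^{(j)}$, plus a correction coming from the defect between $\Pi^\star$ and $I$ (supported on non-unique tuples). Both can be bounded by rank $\times$ operator norm, since the non-unique subspace in the block-symmetric support has dimension $O(s^2t^2/N)$ times that of $\Pi^\star\bigotimes_j\Pi_{\mathrm{sym}}^{(j)}$, yielding an $O(s^2t^2/N)$ trace-norm contribution.

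The main obstacle is bounding the non-block-preserving portion of $\tilde\rho$, since each $P_\sigma$ is unitary with $\|P_\sigma\|_1 = N^{st}$ and a naive triangle inequality over $\sigma$ is hopelessly loose. To circumvent this, I would group permutations by their cross-block ``crossing pattern'' (exactly as in the $\edgep$-based argument outlined in the technical overview: each equivalence class contains permutations that agree modulo $S_t^s$-actions on both sides). For a class at crossing number $k$, the number of permutations is bounded combinatorially by $\mathrm{poly}(st)^k$, the aggregated operator $\sum_{\sigma \in \edgep} P_\sigma$ has trace norm controlled by its rank through the unique-tuple projector, and the Weingarten coefficient is of order $O(1/N^{st+k})$. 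These three factors combine to give a contribution of order $(\mathrm{poly}(st)/N)^k$ per class, and summing the geometric series $\sum_{k\ge 1}(\mathrm{poly}(st)/N)^k$ yields a total of $O(s^2t^2/2^n)$. Combining with the block-preserving contribution proves the bound.
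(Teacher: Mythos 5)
The paper does not actually prove this lemma: it is imported verbatim from \cite{ananth2023pseudorandom_isometries} (see the sentence preceding Definition~\ref{def:almost-inv}), so there is no in-paper proof to compare against. Your Weingarten-calculus route is a legitimate independent strategy, and its opening moves are correct: the identity $\rho_{\unist} = \Ex_\pi[U_\pi^{\otimes st}\ketbra{\Phi}U_\pi^{\dagger\otimes st}]$, the consequent reduction of the twirl of $\rho_{\unist}$ to the twirl of the single pure state $\ket{\Phi}$, and the overlap formula $\bra{\Phi}P_\tau\ket{\Phi}=[\tau\in S_t^s]$ all check out, as does the treatment of the block-preserving part (there the discrepancy is a positive operator and its trace norm is its trace, which is $O((st)^2/N)$).

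Two steps in the non-block-preserving part are genuinely incomplete. First, the trace-norm control of $\sum_{\sigma\in\edgep}P_\sigma$ "through the unique-tuple projector" is asserted but not established: the rank-one-per-orbit argument only applies to $\Pi^\star\left(\sum_{\sigma\in\edgep}P_\sigma\right)\Pi^\star$, so you must first justify discarding the non-unique block of $\tilde\rho$ — e.g.\ by noting that $\Pi^\star$ commutes with every $P_\sigma$ and hence with $\tilde\rho$, and that $\tr[(I-\Pi^\star)\tilde\rho]$ equals the expected collision probability of measuring $U^{\otimes st}\ket{\Phi}$, which is $O((st)^2/N)$. Second, the exponent bookkeeping is off in a way that matters: a permutation of crossing number $k$ is at transposition distance only $\lceil k/2\rceil$ (not $k$) from $S_t^s$, since one transposition repairs a crossing \emph{pair}, so the Weingarten coefficient of such a class is $O(N^{-st-k/2})$. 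With that correction your series has ratio $\mathrm{poly}(st)/\sqrt{N}$, and starting it at $k=1$ as written would yield only $O(\mathrm{poly}(st)/\sqrt{N})$, which is far too weak. You must use that $k=1$ is impossible (every block with an outgoing crossing needs an incoming one) and isolate the $k=2$ term — at most $s^2$ classes, each contributing $O(t^2/N)$ — with the $k\ge 3$ tail dominated geometrically, to actually land on $O(s^2t^2/2^n)$ rather than a larger polynomial over $N$.
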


This state is close to the output of applying a Haar random unitary on $s$ different orthogonal vectors with $t$ copies of each.

\section{Somewhat Pseudorandom Unitaries} \label{section:Somewhat-pseudorandom-unitaries}

\begin{definition} [Non-Adaptive Orthogonal-Inputs Pseudorandom Unitary] \label{def:naoipru}
	We say that $\{Gen_n\}_{n \in \bbN}$ is a \emph{non-adaptive orthogonal-inputs secure pseudorandom unitary family} if there exists a polynomial $\kappa$ such that:
	\begin{itemize}
		\item For every $k\in \binset^{\kappa(n)}$, $U_k \coloneqq Gen_n(k)$ is a QPT algorithm implementing a unitary operation on $n$ qubits.
		\item Fix $s\coloneqq s(n),t\coloneqq t(n)$ polynomials in $n$. Let $A$ be a set and let $\left\{ \ket{\psi^{(1, a)}} ,\ldots \ket{\psi^{(s, a)}} \right\}$ be orthogonal states and $\rho_a$ be any $\ell$-qubit state for all $a\in A$. Let $p_a$ be probabilities such that $\sum_{a\ in A} p_a = 1$. There exists a sufficiently large $n \in \bbN$, such that for any (non-uniform) QPT distinguisher $\adv$ that makes queries of the form 
		\begin{align}
			\rho_{in} \coloneqq
			\sum_{a \in A}
			p_a 
			\left(
			\rho_a \otimes
			\left(
			\bigotimes_{j \in [s]} (\ketbra{\psi^{(j, a)}})^{\otimes t}
			\right)
			\right)
			~
		\end{align}
		to the $st$-tensor of the unitary $U_k$ it holds that
		\begin{multline}
			\bigg|
			\Pr_{k \leftarrow \binset^\kappa} \left[ \adv \left(
			( I_\ell \otimes U_k^{\otimes st})  \rho_{in} 
			( I_\ell \otimes{U_k^\dagger}^{\otimes st}) \right)
			=1 \right]
			\\-
			\Pr_{U \leftarrow Haar_n} \left[ \adv \left(
			( I_\ell \otimes U^{\otimes st})  \rho_{in} 
			( I_\ell \otimes {U^\dagger}^{\otimes st}) \right)
			=1 \right]
			\bigg|
			\le \negl (n)
		\end{multline}
	\end{itemize}
\end{definition}

\begin{theorem} \label{thm:real-naoipru}
	Assuming the existence of quantum secure one way functions, there exists a family of non-adaptive orthogonal-inputs secure pseudorandom \emph{real} unitary.
\end{theorem}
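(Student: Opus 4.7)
The plan is to instantiate the construction sketched in the introduction, namely the distribution of unitaries $U_{\pi} U_{g} H^{\otimes n} U_{f}$, using cryptographic primitives in place of truly random objects. Concretely, given a quantum-secure one-way function, I would invoke \cite{zhandry2012construct,zhandry2016note} to obtain a QPRF family $\{F_k\}$ with range $\{\pm 1\}$ and a QPRP family $\{P_k\}$, and define
\begin{align}
Gen_n(k_F, k_G, k_P) \;\coloneqq\; U_{P_{k_P}} \, U_{G_{k_G}} \, H^{\otimes n} \, U_{F_{k_F}}~.
\end{align}
Since each of $U_{F_{k_F}}, U_{G_{k_G}}, U_{P_{k_P}}$ is real (the phase functions take values in $\{\pm 1\}$, the Hadamard is real, and a permutation matrix is real), the product is real, and each component is QPT-computable by the efficiency of $F, G, P$. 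The key length is $\kappa(n) = 3 \cdot \poly(n)$.

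Next I would reduce to the information-theoretic setting via a standard three-hop hybrid argument. Any QPT distinguisher $\adv$ that, given the state $\rho_{in}$ specified in Definition~\ref{def:naoipru} and oracle-like access through $(I_\ell \otimes U^{\otimes st})$, distinguishes the keyed construction from Haar can be turned into a QPT oracle adversary against the security of $F$, $G$, or $P$ respectively. More precisely, in Hybrid 0 we use the keyed $F_{k_F}, G_{k_G}, P_{k_P}$; in Hybrid 1 we replace $F_{k_F}$ by a truly random $f: \zon \to \{\pm 1\}$; in Hybrid 2 we further replace $G_{k_G}$ by truly random $g$; in Hybrid 3 we further replace $P_{k_P}$ by a truly random permutation $\pi$. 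Each step is justified by the corresponding QPRF/QPRP security, where the reduction simulates the other two components honestly and applies the relevant $st$-fold unitary to the (polynomially-bounded) state $\rho_{in}$. Any noticeable advantage of $\adv$ across a pair of consecutive hybrids would yield a QPT oracle adversary with non-negligible advantage against the corresponding primitive, contradicting post-quantum security.

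It then suffices to bound, statistically, the distinguishing advantage between Hybrid 3 and the Haar experiment. This is exactly the content of the main information-theoretic lemma stated in Section~\ref{section:analysis}: for truly random $f, g, \pi$, the output of $(U_\pi U_g H^{\otimes n} U_f)^{\otimes st}$ on $\rho_{in}$ is within negligible trace distance of $\Ex_{U \sim \mathrm{Haar}_n}[(I_\ell \otimes U^{\otimes st}) \rho_{in} (I_\ell \otimes (U^\dagger)^{\otimes st})]$. By Claim~\ref{claim:channel-almost-inv} applied to the mixture-of-unitaries channel defined by averaging over $f, g, \pi$, it is enough to show that this averaged output is $\negl(n)$-almost invariant under $st$-fold Haar unitary, and the information-theoretic lemma reduces this to closeness to $\rho_\ell \otimes \rho_{\unist}$, which is $O(s^2 t^2/2^n)$-almost invariant by Lemma~\ref{lem:almost-inv-instance}. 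Combining the three hybrid bounds with this statistical bound gives the theorem.

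The main obstacle, and the actual technical content, is the information-theoretic lemma itself, which the paper tackles via the two-step route outlined in Section~\ref{subsection:technical_overview}: first show that $H^{\otimes n} U_f$ maps any polynomial set of input states to $(cn/2^n)$-flat states with overwhelming probability (via Hoeffding and a union bound), and then show that $U_\pi U_g$ turns flat orthogonal inputs into a state close to $\rho_{\unist}$ by (i) projecting onto unique-entry tuples using flatness, (ii) symmetrizing via the average over $g$ so only $\ket{z}\bra{\sigma(z)}$ terms survive, (iii) averaging over $\pi$ so the coefficient $\nu_\sigma$ depends only on the congruence class of $\sigma$ under block-preserving permutations, and (iv) bounding the trace norm of the non-block-preserving contributions by the delicate $\sum_{k>1}(\poly(nst)/2^n)^k$ argument that multiplies a combinatorial growth $\poly(nst)^k$ in the operator norm and class count against a decay $\delta^k$ of $\nu_\edgep$. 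The hardest single step is (iv), specifically proving the $\delta^k$ decay: one must argue that each ``crossing edge'' in a congruence class contributes an approximate-orthogonality factor, even though the would-be inner products between distinct flat input vectors are coupled by the non-recurrence constraint of $\pi$, and this coupling must be controlled carefully. Everything else in the proof of Theorem~\ref{thm:real-naoipru} is the routine hybrid reduction described above.
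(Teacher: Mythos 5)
Your proposal is correct and follows essentially the same route as the paper: the same instantiation via Zhandry's QPRF/QPRP, the same hybrid reduction to truly random $f,g,\pi$, and the same use of Claim~\ref{claim:channel-almost-inv} and Lemma~\ref{lem:almost-inv-instance} to convert the information-theoretic closeness to $\rho_{\unist}$ into indistinguishability from Haar, with the two-stage (flattening, then symmetrization and crossing-number decay) analysis of the main lemma. The only cosmetic omission is that the bound must be extended from a single orthogonal family to the convex mixture over $a\in A$ in $\rho_{in}$, which follows immediately from $\sum_a p_a = 1$ as the paper notes.
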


From definition \ref{def:almost-inv}, claim \ref{claim:channel-almost-inv}, and lemma \ref{lem:almost-inv-instance}, our goal will be to construct a channel $\Phi$ such that its output for copies of orthogonal states looks like  almost invariant under $st$-fold Haar unitary.

\paragraph{The construction}
Let $F,G: \binset^n \to \pmset$ be QPRFs and $P: \binset^n \to \binset^n$ be a QPRP. 
We define the unitary $U_{F,G,P}$ on $n$ qubits as follows:

\begin{align}
	U_{F,G,P} = U_{P} U_G H^{\otimes n} U_{F}
	~,
\end{align}
where $U_F = \sum_{x\in\zon} F(x) \ketbra{x}, U_G= \sum_{x\in\zon} G(x) \ketbra{x}$, and $U_P=\sum_{x\in\zon} \ket{P(x)}\bra{x}$.

\paragraph{Invoking Cryptographic Assumptions}
We move from pseudorandom $F, G, P$ to truly random $f, g, \pi$. The unitary in the random case is denoted by $U_{f, g, \pi}$ and defined similarly.

\begin{tlclaim} \label{claim:pseudo-to-random-ind}
	Let $U_{F, G, P}$ implicitly depend on a key $k$. Assuming the security of $F, G$ and $P$, it holds that
	\begin{multline}
		\bigg|
		\Pr_{k} \left[ \adv \left( 
		\left( I_\ell \otimes {U_{F, G, P}}^{\otimes st} \right)  \rho_{in} 
		\left( I_\ell \otimes {U_{F, G, P}^\dagger}^{\otimes st} \right)
		\right)
		=1 \right]
		\\-
		\Pr_{f, g, \pi} \left[ \adv \left(
		\left( I_\ell \otimes {U_{f, g, \pi}}^{\otimes st} \right)  \rho_{in} 
		\left( I_\ell \otimes {U_{f, g, \pi}^\dagger}^{\otimes st} \right)
		\right)
		=1 \right]
		\bigg| \le \negl(n) ~,
	\end{multline}
	where $k=(k_F, k_G, k_P)$ is the key for the PRFs and PRP.
\end{tlclaim}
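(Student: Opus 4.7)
The plan is a straightforward three-step hybrid argument that swaps out the three pseudorandom objects $F$, $G$, and $P$ with their truly random counterparts one at a time. Define hybrids $H_0, H_1, H_2, H_3$ where $H_0$ is the real experiment with $U_{F,G,P}^{\otimes st}$ applied to $\rho_{in}$; $H_1$ replaces $F$ by a truly random function $f:\zon \to \pmset$; $H_2$ further replaces $G$ by a truly random $g:\zon \to \pmset$; and $H_3$ further replaces $P$ by a truly random permutation $\pi$, so that $H_3$ is the ideal experiment with $U_{f,g,\pi}^{\otimes st}$. By the triangle inequality it suffices to bound each consecutive pair of hybrids by a negligible function of $n$.

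For the first hop $H_0 \to H_1$, I would construct a reduction $\cB$ against the QPRF security of $F$. $\cB$ samples the keys $k_G, k_P$ itself, prepares $\rho_{in}$ (which is public and independent of every key), and then applies $U_{\phi, G_{k_G}, P_{k_P}}^{\otimes st}$ to $\rho_{in}$, where $\phi$ is $\cB$'s QPRF oracle (either $F_{k_F}$ or a truly random $f$). To implement the phase oracle $U_\phi:\ket{x}\mapsto \phi(x)\ket{x}$ from a single quantum query to the standard XOR-oracle form of $\phi$, $\cB$ uses the phase-kickback trick with a $\ket{-}$ ancilla. Similarly, $U_{G_{k_G}}$ is applied using $\cB$'s own knowledge of $k_G$, and $U_{P_{k_P}}$ is applied by computing $P_{k_P}(x)$ into an ancilla, swapping, and uncomputing with $P_{k_P}^{-1}$. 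Since one application of the $st$-fold unitary requires $st$ parallel applications of $U_\phi$, and $st=\poly(n)$, the reduction makes only polynomially many queries. $\cB$ then hands the resulting state to $\adv$ and echoes its guess. By construction, $\cB$ perfectly simulates $H_0$ when $\phi = F_{k_F}$ and $H_1$ when $\phi = f$, so any non-negligible distinguishing advantage of $\adv$ between $H_0$ and $H_1$ would directly break the QPRF security of $F$.

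The hop $H_1 \to H_2$ is symmetric, reducing to the QPRF security of $G$; here the reduction samples $k_P$ itself and simulates the truly random $f$ using an $O(st)$-wise independent hash function, which by Zhandry's result is perfectly indistinguishable from a truly random function against $O(st)$ quantum queries. The hop $H_2 \to H_3$ reduces to the QPRP security of $P$: the reduction is handed oracles for $\pi$ and $\pi^{-1}$ and simulates $f, g$ using $O(st)$-wise independent hash functions as above, while $U_\pi$ is implemented via one query to $\pi$ and one to $\pi^{-1}$ as sketched. Chaining the three hops and invoking the triangle inequality yields $|\Pr_k[\adv=1] - \Pr_{f,g,\pi}[\adv=1]| \le \negl(n)$, as claimed.

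There is no genuine obstacle here; the only mildly technical point is the quantum-oracle simulation of the construction from standard-form oracles for $F, G, P$, which is handled by the phase-kickback and compute–uncompute tricks. The rest is a textbook triple hybrid against QPRF and QPRP security.
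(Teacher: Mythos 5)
Your proof is correct and follows essentially the same route as the paper: a three-hop hybrid argument replacing $F$, $G$, and $P$ one at a time, with each hop reduced to the QPRF/QPRP security of the replaced primitive. The paper's own proof is just this hybrid chain stated in two lines; your additional detail on implementing the phase oracles and simulating the already-replaced random functions via $2q$-wise independence is a faithful (and more explicit) elaboration of the same argument.
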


\begin{proof}
	We define four hybrids. In the first one we query $U_{F, G, P}$, in the second $U_{f, G, P}$, in the third $U_{f, g, P}$, and in the forth $U_{f, g, \pi}$ (all defined similarly to $U_{F, G, P}$). Each two consecutive hybrids are indistinguishable by the security of function replaced between the two hybrids.
\end{proof}

\section{Analysis} \label{section:analysis}

We now turn to analyze the application of $U_{f, g, \pi}$ information theoretically. The main lemma is:

\begin{lemma} \label{lemma:info-theoretic-bound}
	Let $s, t$ be polynomials in $n$ and let $\{\ket{\psi^{(j)}}\}_{j\in [s]}$ be orthogonal quantum states. Then
	\begin{align}
		\left\| \Ex_{f, g, \pi} \left[
		{U_{f, g, \pi}}^{\otimes st}  
		\left(
		\bigotimes_{j \in [s]} (\ketbra{\psi^{(j)}})^{\otimes t}
		\right)
		{U_{f, g, \pi}^\dagger}^{\otimes st} 
		\right]
		- 
		\rho_{\unist} \right\|_1 \le O(s^6 t^4 n^2 /N) ~.
	\end{align}
	Were the expectation is over sampling random functions $f, g$ and a random permutation $\pi$.
\end{lemma}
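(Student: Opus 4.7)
The plan is to split the analysis of $U_{f,g,\pi} = U_\pi U_g H^{\otimes n} U_f$ into two conceptually independent stages, mirroring the factorization of the unitary. First I would analyze the ``flattening'' stage $H^{\otimes n} U_f$: I would show that, with overwhelming probability over $f$, applying this operator to any a-priori fixed polynomial set of input states produces states whose coefficients in the computational basis are all bounded in squared magnitude by roughly $cn/N$. Writing the output amplitude at $\ket{y}$ as $\tfrac{1}{\sqrt{N}}\sum_x f(x)(-1)^{x\cdot y}\beta_x$ gives a sum of $N$ independent zero-mean random variables with bounded $\ell_2$ norm, so Hoeffding (applied separately to the real and imaginary parts of $\beta$) together with a union bound over the $st$ input states and the $N$ output coefficients yields the flatness claim. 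From this point onward I would work under the assumption that the input to the remaining stage $U_\pi U_g$ is orthogonal and $O(n/N)$-flat.

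The second stage is where the bulk of the argument lies, and the target is to show that $(U_\pi U_g)^{\otimes st}$ applied to $s$ blocks of $t$ copies of flat orthogonal states is close in trace norm to $\rho_{\unist}$. I would first use flatness to argue that we can project the input onto the subspace $\Pi^\star$ spanned by $\ket{z}$ with $z\in\uuntt{s}$, paying a $\poly(st)\cdot n/N$ cost from collision probabilities. Then taking the expectation over $g$ kills every outer product $\ket{z}\bra{z'}$ unless $z'$ is a permutation of $z$, since $\bbE_g[\prod_i g(z_i)g(z'_i)]$ vanishes when the entry-histograms differ. Averaging over the random permutation $\pi$ makes the coefficient of $\ket{z}\bra{\sigma(z)}$ depend only on $\sigma$ (and the original input amplitudes), so the resulting state has the form $\rho_{\text{sym}} = \sum_{z,\sigma} \nu_\sigma \ket{z}\bra{\sigma(z)}$ as in the overview. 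Crucially, the symmetry of the input (which holds $t$ copies of each of $s$ identical states per block) forces $\nu_\sigma$ to be constant on congruence classes of $\sigma$ under left/right multiplication by elements of $S_t^s$, so I can regroup $\rho_{\text{sym}} = \sum_\edgep \nu_\edgep A_\edgep$ indexed by congruence classes, where the unique class of block-preserving permutations gives a contribution that matches $\rho_{\unist}$ up to normalization.

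The core technical obstacle is to bound $\|\sum_{k\ge 2} \sum_{\edgep\in\edgep_k} \nu_\edgep A_\edgep\|_1$, where $k$ is the number of ``crossings'' of a congruence class (inputs whose image lies in a different block). I would handle this via the three combinatorial/analytic estimates flagged in the overview: a bound $\|A_\edgep\|_1 \le (\poly(nst))^k$ (up to global normalization) obtained by counting the permutations within a class using the block-swap symmetry; a bound $|\edgep_k|\le (\poly(nst))^k$ on the number of congruence classes with crossing number $k$ by an encoding argument that describes each class by at most $k$ ``crossing instructions''; and the critical estimate $|\nu_\edgep| \le (\poly(nst)/N)^k$ up to the same global normalization. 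The last bound is the hard part: each crossing edge morally contributes an inner product between two orthogonal input blocks, which would be $0$ if the permutation constraints allowed truly independent sums, but the non-recurrence constraint of $\pi$ couples all crossings. I would handle this by expanding $\nu_\edgep$ as a ratio of sums over $\uuntt{s}$, decoupling one crossing at a time by replacing the unique-entry constraint with a sum over all tuples minus a collision correction (using flatness to bound the correction by $\poly(nst)/N$), and inductively extracting one $\poly(nst)/N$ factor per crossing from the resulting disturbed inner products. Combining the three estimates gives a geometric series $\sum_{k\ge 2}(\poly(nst)/N)^k = O(\poly(nst)^2/N^2)$, which together with the flattening and projection losses yields the stated $O(s^6 t^4 n^2/N)$ bound after applying the Haar-invariance equivalence from Claim~\ref{claim:channel-almost-inv} and Lemma~\ref{lem:almost-inv-instance}.
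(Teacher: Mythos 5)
Your proposal follows essentially the same route as the paper: Hoeffding-based flattening, projection onto the unique-entry subspace, $g$-averaging to kill non-permutation cross terms, $\pi$-averaging to make the coefficients depend only on the outer permutation, regrouping by congruence classes, and the three estimates on $\norm{A_\edgep}_1$, $\abs{\edgep_k}$, and $\abs{\nu_\edgep}$ summed as a geometric series. The only quantitative slip is your claimed per-crossing decay of $\poly(nst)/N$ for $\nu_\edgep$ — the inductive decoupling actually yields $(\epsilon^2 N\,\poly(nst))^{k/2}$, i.e.\ roughly $\poly(nst)/\sqrt{N}$ per crossing, because a disturbed inner product that merges into an existing loaded sum only costs $\epsilon$ while creating a new loaded sum costs $\epsilon^2 N$ — but since the $k=2$ term then has order $s^6t^4n^2/N$, the stated bound still follows.
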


\subsection{Achieving Flatness} \label{subsection:Achieving-flatness}

The first two steps in the construction, namely adding a random binary phase with $U_f$ and performing $H^{\otimes n}$, achieve the goal of flattening the state with respect to the standard basis.

\begin{definition}
	A quantum state $\kalpha = \sum_x \alpha_x \ket{x}$, where $\ket{x}$ are the computational basis elements, is $\epsilon$-flat if $\max_x \abs{\alpha_x}^2 \le \epsilon$.
\end{definition}
Note that this is equivalent to the min-entropy of the computational-basis measurement of $\kalpha$ having min-entropy at least $\log(1/\epsilon)$. 

\begin{lemma}
	Let $\ket{\beta} = \sum_{x\in \zon} \beta_x\ket{x}$ be a quantum state, $c>0$, and let $f:\zon \rightarrow \pmset$ be a random function. Then with probability at least $1-2\exp\left(-\left(\frac{c}{4}-\ln(2)\right)n\right)$ the state $H^{\otimes n} U_f \ket{\beta}$ is $c\cdot \frac{n}{2^{n}}$-flat.
\end{lemma}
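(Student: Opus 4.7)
The plan is to explicitly compute the amplitudes of $H^{\otimes n} U_f \ket{\beta}$ in the standard basis, recognize each one as a sum of independent bounded mean-zero random variables, apply Hoeffding's inequality to bound each amplitude individually, and then union-bound over all $2^n$ standard basis elements.

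First, I would expand the amplitude at a fixed $y \in \zon$:
\begin{align}
\gamma_y \;\bydef\; \bra{y} H^{\otimes n} U_f \ket{\beta} \;=\; \frac{1}{2^{n/2}} \sum_{x \in \zon} f(x)\,(-1)^{x \cdot y}\, \beta_x.
\end{align}
Since the $\beta_x$ are complex, I would split $\beta_x = a_x + i b_x$ with $a_x, b_x \in \mathbb{R}$, so that $\gamma_y = R_y + i I_y$ where $R_y$ and $I_y$ are real-valued sums of the form $\frac{1}{2^{n/2}} \sum_x f(x) (-1)^{x \cdot y} a_x$ and analogously with $b_x$. For random $f : \zon \to \pmset$, each summand $Z_x = \frac{1}{2^{n/2}} f(x) (-1)^{x \cdot y} a_x$ is a zero-mean independent random variable bounded in $[-|a_x|/2^{n/2},\, |a_x|/2^{n/2}]$, and similarly for the imaginary part. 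Crucially, normalization of $\ket{\beta}$ gives $\sum_x a_x^2 \le 1$ and $\sum_x b_x^2 \le 1$, so $\sum_x (b_x - a_x)^2_{\text{Hoeffding}} \le 4/2^n$ in each case.

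Next, I would apply Hoeffding's inequality to $R_y$ and $I_y$ separately with threshold $\epsilon = \sqrt{cn/(2 \cdot 2^n)}$, which guarantees that if both $|R_y| < \epsilon$ and $|I_y| < \epsilon$, then $|\gamma_y|^2 = R_y^2 + I_y^2 < c n / 2^n$. Hoeffding yields
\begin{align}
\Pr\bigl[\,|R_y| \ge \epsilon\,\bigr] \;\le\; 2 \exp\!\left(-\tfrac{2\epsilon^2}{4/2^n}\right) \;=\; 2 \exp(-cn/4),
\end{align}
and the same bound for $|I_y|$. A union bound gives that $|\gamma_y|^2 \ge cn/2^n$ with probability at most $4 \exp(-cn/4)$ for any fixed $y$.

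Finally, I would union-bound over all $y \in \zon$. The total failure probability is at most $2^n \cdot 4 \exp(-cn/4) = 4\exp\!\bigl(n(\ln 2 - c/4)\bigr)$, which matches the target bound $2\exp\!\bigl(-(c/4 - \ln 2)n\bigr)$ up to an absorbable constant (or one can tighten by replacing the factor-of-two loss in the $R_y,I_y$ split with a slightly sharper allocation of the budget). No step is delicate; the only point requiring care is the handling of complex amplitudes via the real/imaginary decomposition, which is what introduces the two separate Hoeffding applications and the constant factor in front of the exponent.
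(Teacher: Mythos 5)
Your proposal is correct and takes essentially the same route as the paper: expand the amplitude $\gamma_y$ as an exponential sum in $f$, split into real and imaginary parts, apply Hoeffding to each with threshold $\sqrt{cn/(2\cdot 2^n)}$, and union-bound over the $2^n$ basis elements. The only deviation is a factor of $2$ in the leading constant from tracking two-sidedness explicitly, which is immaterial to the bound.
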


\begin{proof}
	Denote $\ket{\xi} = H^{\otimes n} U_f \ket{\beta}$. Let $\ket{y}$ be a standard basis element, and look at $\xi_y = \braket{y|\xi}$:
	\begin{align}
		\xi_y=\bra{y} H^{\otimes n} U_f \ket{\beta} &= \frac{1}{2^{n/2}} \sum_{x\in \zon} (-1)^{x\cdot y} \bra{x} U_f \ket{\beta} \\
		&= \frac{1}{2^{n/2}} \sum_{x\in \zon} (-1)^{x\cdot y} f(x) \braket{x|\beta} \\
		&= \frac{1}{2^{n/2}} \sum_{x\in \zon} (-1)^{x\cdot y} f(x) \beta_x
	\end{align}
	
	We analyze the real and imaginary parts $\xi_y$ separately.
	Define the random variables $Z_x$ to be $\Re( 2^{-n/2} (-1)^{x\cdot y} f(x) \beta_x )$. It follows that $|Z_x| \le 2^{-n/2} |\beta_x|$. Using Hoeffding's inequality we get 
	\begin{align}
		\Pr \left[ |\Re(\xi_y)| \ge \sqrt{\epsilon/2} \right] \le
		\exp {\left( \frac{-2\cdot (\epsilon/2)}{\sum_x |2 \cdot 2^{-n/2}\beta_x|^2} \right)} = 
		\exp {\left( \frac{-\epsilon \cdot 2^n}{4 \sum_x |\beta_x|^2} \right)} = 
		\exp(-\epsilon\cdot 2^n/4)
	\end{align}
	
	Where the second to last equality follows from $\ket{\beta}$ being a quantum state and thus a unit vector. We get $\Pr \left[ |\Im(\xi_y)| \ge \sqrt{\epsilon/2} \right] \le \exp(-\epsilon\cdot 2^n/4)$ similarly. Using the bound on both the real part and imaginary part we get:
	
	\begin{align}
		\Pr\left[ |\xi_y| \ge \sqrt{\epsilon} \right] &\le
		\Pr\left[ |\Re(\xi_y)|^2 + |\Im(\xi_y)|^2 \ge \epsilon \right] \\&\le
		\Pr\left[ |\Re(\xi_y)|^2 \ge \epsilon/2 \vee |\Im(\xi_y)|^2 \ge \epsilon/2 \right] \\&\le
		\Pr\left[ |\Re(\xi_y)| \ge \sqrt{\epsilon/2} \right] + \Pr\left[ |\Im(\xi_y)| \ge \sqrt{\epsilon/2}\right] \\&\le 
		2\exp{(-\epsilon\cdot 2^n/4)}
	\end{align}

	Finally, we use the union bound over the $2^n$ entries of $\ket \xi$ to get that with probability at most $2\exp{(-\epsilon\cdot 2^n/4)} \cdot 2^n$ there exists an entry $|\xi_y| \ge \sqrt{\epsilon}$. Taking $\epsilon = c \cdot \frac{n}{2^{n}}$  completes the lemma.
\end{proof}

Using the union bound, an immediate corollary follows:
\begin{corollary} \label{col:poly-flat}
	Let $\{\ket{\beta^{(j)}}\}_{j\in [s]}$ be quantum states, $c>0$, and let $f:\zon \rightarrow \pmset$ be a random function. Then with probability $1-s\cdot 2\exp\left(-\left(\frac{c}{4}-\ln(2)\right)n\right)$ all states $H^{\otimes n} U_f \ket{\beta^{(j)}}$ are $c\cdot \frac{n}{2^{n}}$-flat.
\end{corollary}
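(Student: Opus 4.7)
The plan is to apply the preceding per-state flatness lemma to each $\ket{\beta^{(j)}}$ individually and then invoke a union bound over $j \in [s]$. The key observation is that the previous lemma's probability statement is taken over the choice of the random function $f$, and crucially the \emph{same} function $f$ is shared across all $s$ outputs $H^{\otimes n} U_f \ket{\beta^{(j)}}$. This is exactly the setting in which a union bound on the ``bad event'' for each individual $j$ gives a uniform guarantee.

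Concretely, I would proceed as follows. For each fixed $j \in [s]$, define the bad event $E_j$ to be the event that $H^{\otimes n} U_f \ket{\beta^{(j)}}$ fails to be $c \cdot n/2^n$-flat. By the preceding lemma applied to $\ket{\beta} = \ket{\beta^{(j)}}$, we have
\begin{align}
\Pr_f[E_j] \le 2\exp\!\left(-\left(\tfrac{c}{4}-\ln(2)\right)n\right).
\end{align}
The union bound then gives
\begin{align}
\Pr_f\!\left[\bigcup_{j \in [s]} E_j\right] \le \sum_{j=1}^{s} \Pr_f[E_j] \le s \cdot 2\exp\!\left(-\left(\tfrac{c}{4}-\ln(2)\right)n\right),
\end{align}
and the complement of $\bigcup_j E_j$ is precisely the event that \emph{all} $s$ output states are simultaneously $c \cdot n/2^n$-flat, yielding the claimed bound.

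There is essentially no obstacle here: the previous lemma already does all the work via Hoeffding's inequality on the real and imaginary parts of each amplitude, and the union bound is standard. The only minor subtlety worth stating explicitly is that the lemma's guarantee holds for \emph{any} fixed input state $\ket{\beta}$, so it applies to each $\ket{\beta^{(j)}}$ independently of the others (even though the flattening function $f$ is shared); thus no independence between the events $E_j$ is needed for the union bound to yield the stated probability.
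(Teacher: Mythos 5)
Your proof is correct and matches the paper's approach exactly: the paper states this corollary as an immediate consequence of the union bound applied to the preceding lemma, which is precisely what you spell out. Nothing further is needed.
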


\subsection{Getting from Flat States to Random-Looking Ones} \label{subsection:flat-to-random-looking}

We now prove that applying a random binary phase and a random permutation to $t$ copies of $s$ orthogonal flat vectors with is close in trace distance the almost invariant state from lemma \ref{lem:almost-inv-instance}.
We prove the following lemma:

\begin{lemma} \label{lem:state-to-almost-invariant}
	Let $g: \binset^n \to \pmset$ be a random function and $\pi: \binset^n \to \binset^n$ be a random (inner) permutation.
	Let $\{\ket{\alsup{j}}\}_{j \in [s]}$ be a set of $s$ arbitrary orthogonal $\epsilon$-flat vectors in $\bbC^{\binset^n}$.
	Denote:
	\begin{align}
		\rho &\coloneqq \bbE_{g, \pi} \left[
		(U_\pi U_g)^{\otimes st} \left( \otimes_{j \in [s]} \ketbra{\alsup{j}}^{\otimes t} \right) (U_g^\dagger U_\pi^\dagger)^{\otimes st}
		\right]~.
	\end{align}
	Then,
	\begin{align}
		\| \rho - \rho_\unist \|_1 \le O((st)^2\epsilon + N s^6 t^4 \epsilon^2)
		~.
	\end{align}
\end{lemma}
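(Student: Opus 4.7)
The plan proceeds in three steps mirroring the outline of Sections~\ref{sec:tech:sym}--\ref{sec:tech:diff}. First, I would reduce to unique-entry inputs. Let $\Pi^\star$ project onto $\mathrm{Span}\{\ket{z} : z \in \uuntt{s}\}$. Expanding $\rho_{\mathrm{in}} \coloneqq \bigotimes_{j}\ketbra{\alsup{j}}^{\otimes t}$ in the standard basis and bounding the total weight of the collision terms via $\epsilon$-flatness of each $\ket{\alsup{j}}$ together with a union bound over the $\binom{st}{2}$ coordinate pairs yields $\|\rho_{\mathrm{in}} - \Pi^\star \rho_{\mathrm{in}} \Pi^\star\|_1 = O((st)^2 \epsilon)$. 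Since $(U_\pi U_g)^{\otimes st}$ preserves the unique-entry subspace, the channel commutes with conjugation by $\Pi^\star$ and it suffices to analyze $\rho^\star \coloneqq \bbE_{g,\pi}\bigl[(U_\pi U_g)^{\otimes st}(\Pi^\star\rho_{\mathrm{in}}\Pi^\star)(U_g^\dagger U_\pi^\dagger)^{\otimes st}\bigr]$.

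Next, I would compute $\rho^\star$ in closed form. For $z, z' \in \uuntt{s}$, the $g$-average of $\prod_i g(z_i) g(z'_i)$ vanishes unless $z' = \sigma(z)$ for some outer permutation $\sigma \in S_{st}$, which eliminates all cross-terms. The $\pi$-average then replaces $z$ by a uniformly random element of $\uuntt{s}$, so the coefficient of $\ket{z}\bra{\sigma(z)}$ is a single number $\nu_\sigma$ depending only on $\sigma$; setting $A_\sigma \coloneqq \sum_{z \in \uuntt{s}} \ket{z}\bra{\sigma(z)}$, this gives $\rho^\star = \sum_\sigma \nu_\sigma A_\sigma$. Because the input has $t$ identical copies in each of its $s$ blocks, $\nu_\sigma$ is invariant under pre- and post-composition by $S_t^s$, so it depends only on the double coset $\edgep$ of $\sigma$ in $S_t^s \backslash S_{st} / S_t^s$; regrouping, $\rho^\star = \sum_\edgep \nu_\edgep A_\edgep$. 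The unique zero-crossing coset is $S_t^s$ itself, and a direct calculation using flatness and i.i.d.\ approximation shows that its contribution equals $\rho_{\unist}$ up to a $1 + O((st)^2/N)$ normalization factor from the ratio $N^{st}/\Nst$, contributing only $O((st)^2/N)$ to the final trace-distance bound.

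The main technical task is to bound $\bigl\|\sum_{k \ge 2}\sum_{\edgep \in \edgep_k} \nu_\edgep A_\edgep\bigr\|_1$, where $k$ is the crossing number of $\edgep$. I would decompose this bound into three ingredients: (i) $\|A_\edgep\|_1 \le |\edgep|\cdot\Nst$, since each $A_\sigma$ is a partial permutation operator of trace norm $\Nst$, with $|\edgep|$ growing as $|S_t^s| \cdot (st)^{O(k)}$; (ii) a direct combinatorial count yielding $|\edgep_k| \le (st)^{O(k)}$; and (iii) the crucial estimate that $|\nu_\edgep|$ decays as $(\poly(nst)/N)^k \cdot |S_t^s|/\Nst$. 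For (iii), I would expand $\nu_\edgep$ as the expectation over uniform $z \in \uuntt{s}$ of $\prod_{(j,i)}\alsup{j}_{z_{j,i}}\overline{\alsup{\sigma^{-1}(j,i)_1}_{z_{j,i}}}$ and pass to the i.i.d.\ measure on $(\zon)^{st}$ via inclusion--exclusion. Under i.i.d.\ sampling the expectation factorizes into $st$ inner products, each crossing coordinate $(j,i)$ with $\sigma^{-1}(j,i)_1 \ne j$ contributing a factor $\braket{\alsup{\sigma^{-1}(j,i)_1}|\alsup{j}}/N = 0$ by orthogonality, so only the zero-crossing class survives under i.i.d.\ sampling. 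I expect this to be the main obstacle: one must show that the inclusion--exclusion corrections arising from relaxing the uniqueness constraint, although they restore some would-be zero factors by pairing a crossed coordinate with a matching one, each pay a $1/N$ price that gets amplified by flatness to $\epsilon$, so the total contribution for crossing number $k$ is still bounded by $(\poly(nst)\cdot\epsilon)^k$. Assembling (i)--(iii) and summing the resulting geometric series over $k \ge 2$ produces the $O(Ns^6 t^4 \epsilon^2)$ term which, combined with the flattening and normalization errors, gives the stated bound.
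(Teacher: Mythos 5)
Your overall architecture matches the paper's: restrict to the unique-entry subspace (cost $O((st)^2\epsilon)$), use the $g$- and $\pi$-averages to reduce to $\rho^\star = \sum_\edgep \nu_\edgep A_\edgep$ over double cosets of $S_t^s$, and bound the non-block-preserving contribution by combining a norm bound on $A_\edgep$, a count of congruence classes, and a decay bound on $\nu_\edgep$. However, your ingredient (i) has a quantitatively fatal gap. The triangle-inequality bound $\|A_\edgep\|_1 \le |\edgep|\cdot\Nst$ with $|\edgep| \approx (t!)^s(st)^{O(k)}$ is far too weak: combined with the true decay $|\nu_\edgep| \le (\Nst)^{-1}((st)^2\epsilon^2N)^{k/2}$ (the paper's Lemma~\ref{lem:boundnuedge}; note that your stated bound carries a spurious extra factor of $|S_t^s|$, since already $\nu_{\edgep_0}\approx 1/\Nst$ rather than $|S_t^s|/\Nst$), the contribution of a single class with $k=2$ is of order $(t!)^s\cdot\poly(n,s,t)\cdot\epsilon^2 N$, and the factor $(t!)^s = 2^{\Theta(st\log t)}$ is superpolynomial and is not cancelled by anything else in your accounting (for $s=t=n$ it dwarfs even $N^2$). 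The paper avoids this by exploiting the low-rank structure of $A_\edgep$: it partitions $\uuntt{s}$ into the $\Nst/(t!)^s$ orbits $P$ of $S_t^s$, observes that each restriction $A_{\edgep,P}$ is rank one with trace norm $\sqrt{|P|}\cdot\sqrt{|\edgep|}$ rather than $|P|\cdot|\edgep|$, and obtains $\|A_\edgep\|_1 \le \Nst\sqrt{|\edgep|/(t!)^s} \le \Nst\, t^k$. This square-root saving is exactly what makes the geometric series over $k$ converge, and it is missing from your argument.

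Separately, your ingredient (iii) --- which you correctly identify as the main obstacle --- is left unresolved, and the per-crossing decay you claim, $(\poly(nst)\cdot\epsilon)^k$, is stronger than what the paper actually establishes, namely $((st)^2\epsilon^2N)^{k/2}$, i.e.\ a factor $\epsilon\sqrt{N}\cdot\poly(st)$ per crossing rather than $\epsilon$. The difficulty is precisely that the inclusion--exclusion corrections do not pay independent $1/N$ prices: resolving one would-be inner product via $\sum_z\alsup{j}_z\alssup{j'}_z = 0$ produces correction terms that merge with the remaining constrained sums and change their structure, and the paper needs the inductive bookkeeping of Claim~\ref{claim:boundtgen} (tracking ``loaded'' versus ``free'' indices with a magnitude potential) to show that each \emph{pair} of crossings contributes at most $\epsilon^2N\cdot\poly(st)$. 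Without that argument, and without the corrected bound on $\|A_\edgep\|_1$, the proposal does not establish the stated $O((st)^2\epsilon + Ns^6t^4\epsilon^2)$ bound.
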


\subsubsection{Focusing on Unique States}

Recall that $g_z = \prod_{j, i} g(z_{j, i})$. We  notice that for all $z, z' \in \binset^{n \cdot st}$ it holds that $\bbE_g [ g_{z} g_{z'}^* ]$ is equal to $1$ if and only if the binary type of $z$ and $z'$ 
(that is, the histogram of the entries modulus 2) are equal. Otherwise, the expectation is equal to $0$. Expressing $\rho$ in the standard basis we get
\begin{align}
	\rho & =
	\Ex_{g, \pi} \left[
	\sum_{z, z'\in \binset^{n\cdot st}}
	g_z g_{z'}
	\prod_{\substack{j \in [s]\\ i\in [t]}}
	\alpha_{z(j,i)}^{(j)} {\alpha^*}^{{(j)}}_{z'(j,i)}
	\ket{\pi(z)}\bra{\pi(z')}
	\right]
	\\ & =
    \Ex_{\pi} \left[
	\sum_{z, z'\in \binset^{n\cdot st}}
	\Ex_{g} [g_z g_{z'}^*]
	\prod_{\substack{j \in [s]\\ i\in [t]}}
	\alpha_{z(j,i)}^{(j)} {\alpha^*}^{{(j)}}_{z'(j,i)}
	\ket{\pi(z)}\bra{\pi(z')}
 	\right]
    \\ & =
        \Ex_{\pi} \left[
	\sum_{z\in \binset^{n\cdot st}}
	\sum_{\substack{z'\in \binset^{n\cdot st} \\ \mathsf{bintype}(z')=\mathsf{bintype}(z)}}
	\prod_{\substack{j \in [s]\\ i\in [t]}}
	\alpha_{z(j,i)}^{(j)} {\alpha^*}^{{(j)}}_{z'(j,i)}
	\ket{\pi(z)}\bra{\pi(z')}
 	\right]
	~,
\end{align}
where $\mathsf{bintype}(z)$ is the binary type of $z$.

Let $\piu \coloneqq \sum_{\vec{z} \in \uuntt{s}} \ketbra{\vec{z}}$ be the uniqueness projector, and let
\begin{align}
	\rho^{\star} \coloneqq \frac{\piu \rho \piu}{\tr[\piu \rho]}
\end{align}
be the unique restrictions of $\rho$. We show that $\rho$ is close to its unique restriction. 

\begin{claim}\label{claim:unique_res}
	It holds that
	\begin{align}
		\| \rho - \rho^{\star} \|_1 \le O \left( (st)^2 \epsilon \right) ~.
	\end{align}
\end{claim}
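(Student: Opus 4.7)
The plan is to show that, after averaging over the binary phase $g$, the off-diagonal blocks of $\rho$ with respect to the decomposition $I = \piu + (I-\piu)$ vanish, so that $\rho = \piu \rho \piu + (I-\piu)\rho(I-\piu)$ is block-diagonal. Given this, the claim reduces to bounding the ``non-unique'' weight $\delta \coloneqq \tr[(I-\piu)\rho]$ by $O((st)^2\epsilon)$ and then a routine triangle-inequality manipulation of $\rho - \rho^\star$.

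Expanding $\rho$ in the computational basis using $(U_\pi U_g)^{\otimes st}\ket{z} = g_z \ket{\pi(z)}$ and exploiting the independence of $g$ and $\pi$, the expectation over $g$ yields $\Ex_g[g_z g_{z'}] = 1$ exactly when every $v \in \zon$ appears an even total number of times in $z \cup z'$ (equivalently, $z$ and $z'$ share the same binary type) and is $0$ otherwise. For the cross block $\piu \rho (I-\piu)$, the surviving pairs must satisfy $z \in \uuntt{s}$ and $z' \notin \uuntt{s}$, since the coordinatewise permutation $\pi$ preserves uniqueness. The crux is a short counting argument: if $z$ has $st$ distinct entries and $z'$ has the same binary type, each entry of $z$ must appear an odd number of times in $z'$, accounting for at least $st$ of the $st$ slots in $z'$, which forces each to appear exactly once with no other value appearing; hence $z'$ is itself a permutation of $z$ and lies in $\uuntt{s}$ --- contradicting $z' \notin \uuntt{s}$. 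Thus $\piu\rho(I-\piu) = 0$, and by symmetry $(I-\piu)\rho\piu = 0$. Bounding $\delta$ is then straightforward: because $U_g$ preserves computational-basis probabilities and $U_\pi$ preserves uniqueness, $\delta$ equals the collision probability among $st$ independent draws from the distributions $\{|\alpha^{(j)}_x|^2\}$, and $\epsilon$-flatness gives $\sum_x |\alpha^{(j)}_x|^2 |\alpha^{(j')}_x|^2 \le \epsilon$ for every pair; a union bound over the $\binom{st}{2}$ index pairs yields $\delta \le O((st)^2\epsilon)$.

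Finally, since $\tr[\piu\rho\piu] = 1 - \delta$, the block-diagonality above gives
\begin{align*}
\rho - \rho^\star = -\tfrac{\delta}{1-\delta}\,\piu\rho\piu + (I-\piu)\rho(I-\piu),
\end{align*}
a sum of a negative-of-a-PSD operator and a PSD operator, each of trace norm exactly $\delta$; the triangle inequality therefore yields $\|\rho - \rho^\star\|_1 \le 2\delta = O((st)^2\epsilon)$. The only delicate step is the binary-type counting that kills the cross blocks, and it is essential here: a generic gentle-measurement bound applied to $\piu$ with $\tr[\piu\rho] \ge 1-\delta$ would only give $O(\sqrt{\delta}) = O(st\sqrt{\epsilon})$, losing a square-root factor relative to what the claim asserts.
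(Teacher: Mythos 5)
Your proof is correct and follows essentially the same route as the paper's: decompose $\rho$ into its $\piu$- and $(I-\piu)$-blocks, argue the cross-blocks vanish because equality of binary types forces $z$ and $z'$ to be simultaneously unique, and bound the non-unique weight by a collision-probability/flatness argument, turning $\tr[(I-\piu)\rho]\le\delta$ into a $2\delta$ trace-norm bound. You spell out two steps the paper leaves implicit (the counting argument that same binary type plus uniqueness of $z$ forces uniqueness of $z'$, and the explicit $2\delta$ computation from block-diagonality), and your remark that a naive gentle-measurement bound would only give $O(\sqrt{\delta})$ is a correct and worthwhile observation.
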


\begin{claimproof}
Notice that $\piu \rho (I - \piu) = (I -\piu) \rho \piu = 0$, as $\ket{\pi(z)}$ is in the unique restriction if and only if $\bra{\sigma(\pi(z))}$ is in the unique restriction too (which occurs if and only if the binary type/histogram has $st$ entries of $1$). It follows that $\rho = \piu \rho \piu + (I - \piu) \rho (I - \piu)$, and as $\piu \rho \piu$ and $(I - \piu) \rho (I - \piu)$ are positive semi-definite, it is enough to show that
	\begin{align}
		\tr[(I - \piu) \rho (I - \piu)] \le  (st)^2 \cdot \epsilon~.
	\end{align}
	We note that $\piu$ is invariant under $U_g, U_\pi$, therefore
	\begin{align}
		\tr[(I - \piu) \rho (I - \piu)] &=
		\tr\left[(I - \piu) \bbE_{g, \pi} \left[
		(U_\pi U_g)^{\otimes st} \left( \otimes_{j \in [s]} \ketbra{\alsup{j}}^{\otimes t} \right) (U_g^\dagger U_\pi^\dagger)^{\otimes st}\right] \right] 
		\\&=
		\bbE_{g, \pi} \left[ \tr\left[(U_\pi U_g)^{\otimes st} (I - \piu) 
		\left( \otimes_{j \in [s]} \ketbra{\alsup{j}}^{\otimes t} \right) (U_g^\dagger U_\pi^\dagger)^{\otimes st}\right] \right] 
		\\&=
		\bbE_{g, \pi} \left[ \tr\left[ (I - \piu) 
		\left( \otimes_{j \in [s]} \ketbra{\alsup{j}}^{\otimes t} \right) \right] \right] 
		\\&=
		\tr\left[(I - \piu) \left(\otimes_{j \in [s]} \ketbra{\alsup{j}}^{\otimes t}\right)\right]~,
	\end{align}
	which is exactly the probability of measuring $\otimes_{j \in [s]} \ketbra{\alsup{j}}^{\otimes t}$ in the computational basis, and obtaining an $(st)$-tuple that contains a repetition (i.e.\ an element in $\zon$ that appears more than once). Due to $\epsilon$-flatness, the probability for this is bounded by $(st)^2 \cdot \epsilon$. 
\end{claimproof}

Recall that $\uuntt{s}$ is the set of all $st$ length vectors with unique entries from $\zon$. From the definitions of $U_g, U_\pi$ and claim \ref{claim:unique_res}, for $c_1 = \tr[\piu \rho_{g, \pi}] \ge \frac{1}{1-  \epsilon (st)^2}$ we get

\begin{align}
	\rho^{\star} & =
	c_1 \cdot     	\Ex_{\pi} \left[
	\sum_{z\in \uuntt{s}}
	\sum_{\sigma \in S_{st}}
	\prod_{\substack{j \in [s]\\ i\in [t]}}
	\alpha_{z(j,i)}^{(j)} {\alpha^*}^{{(j)}}_{\sigma(z)(j,i)}
	\ket{\pi(z)}\bra{\sigma(\pi(z))}
	\right]
	~.
\end{align}
Notice the sum over $z'$ changed to sum over $\sigma \in S_{st}$ (an outer permutation which permutes the positions of the entries) as for $z$ with unique entries it holds that $bintype(z)=bintype(z')$ if and only if there exists $\sigma \in S_{st}$ s.t. $z'=\sigma(z)$.

For all $(j,i) \in [s] \times [t]$ we define $\alsup{j,i} = \alsup{j}$ (since we implicitly think of the index $(j,i)$ as pointing to the $j$'th qubit group which consists of a $t$-tensor of $\ketbra{\alsup{j}}$).
Changing the order of summation by $\pi^{-1}$. We get
\begin{align}
	\rho^{\star} & =
	c_1 \cdot \sum_{\sigma \in S_{st}} \sum_{z \in \uuntt{s}}
	\underbrace{
		\bbE_{\pi^{-1}} \left[
		\prod_{\substack{j \in [s]\\ i\in [t]}}
		\alpha_{\pi^{-1}(z)(j,i)}^{(j,i)} {\alpha^*}^{{(j,i)}}_{\sigma(\pi^{-1}(z))(j,i)}
		\right]
	}_{\text{denote $\nu_{\sigma,z}$}} 	
	\ket{z}\bra{\sigma(z)}	
	~.
\end{align}

As $\pi^{-1}$ is also a random permutation, we get that $\nu_{\sigma, z}$ is independent of $z$, i.e.\ $\nu_{\sigma, z} = \nu_{\sigma}$ for all $z\in \uuntt{s}$. Making a change of variables $x = \pi^{-1}(z)$,
\begin{align}
	\nu_{\sigma} =
	\bbE_{x\in \uuntt{s}} \left[
	\prod_{\substack{j \in [s]\\ i\in [t]}}
	\alpha_{x(j,i)}^{(j,i)} {\alpha^*}^{{(j,i)}}_{\sigma(x)(j,i)}
	\right]
\end{align}
and
\begin{align}
	\rho^{\star} & = c_1 \cdot\sum_{\sigma \in S_{st}} \nu_{\sigma} \sum_{z\in \uuntt{s}}
	\ket{z}\bra{\sigma(z)}
	~.
\end{align}

\subsubsection{Using Orthogonality to Reach Closeness to an Almost Invariant State}

We consider the operator
\begin{align}
	A = \sum_{\sigma  \in S_{st}} \nu_{\sigma} \sum_{z\in \uuntt{s}}
	\ket{z}\bra{\sigma(z)}~,
\end{align}
and show that it is close in trace norm to to the same operator summing only over $\sigma \in S_t^s$. For that, we define the following.

\begin{definition}
	For any permutation $\sigma \in S_{st}$, we consider the associated directed graph $G_{\sigma}$, whose vertex set is $[s] \times [t]$, and there is an edge $(j, i) \to (j', i')$ if and only if $\sigma((j,i)) = (j',i')$. 
	For all $j \in [s]$, we define the $j$-th vertex-block as the set $\{ (j,i) \}_{i \in [t]}$. We sometimes completely associate $\sigma$ with $G_\sigma$.
	
	We associate each vertex with its outgoing edge. For any vertex $v=(j,i) \in G_\sigma$ with the outgoing edge $(j',i') =\sigma((i, j))$, we denote $j_v=j$, $j'_v = j'$, namely the block-source and block-destination of $v$ in the graph.
	We say that $v$ is a \emph{crossing vertex} if $j_v \ne j_v'$, and otherwise $v$ is \emph{non-crossing}. We denote the set of crossing vertices by $\csv_\sigma$, and will often omit the subscript  when $\sigma$ is clear from the context. Likewise, we denote the set of non-crossing vertices by $\overline{\csv_\sigma}$.
	
	The \emph{block edge pattern} of $\sigma$ is the vector
	$\edgep_\sigma \in \bbN^{[s]\times [s]}$, where $\edgep_\sigma[j,j']$ is the number of crossing vertices from block $j$ to block $j'$. 
	We say that two permutations are congruent (with respect to $S_t^s$) if they have the same block edge pattern. It follows that $\sigma, \sigma'$ are congruent, denoted $\sigma \cong \sigma'$ if and only if there exist $\sigma_1, \sigma_2 \in S_t^s$ s.t.\ $\sigma' = \sigma_1 \sigma \sigma_2$. We overload the notation and use $\edgep$ also to denote the congruence class corresponding to this pattern.
	
	The number of crossing and non-crossing vertices is thus $\abs{\csv}$, and $\abs{\ncsv}$ respectively (so, $\abs{\csv} + \abs{\ncsv} = st$). We note that $\abs{\csv}$, $\abs{\ncsv}$ only depend on the congruence class $\edgep$. 
\end{definition}

Under the above definition, and denoting $x(v) = x(j,i)$, we have that

\begin{align}
	\nu_\sigma &= 
	\bbE_{x\in \uuntt{s}} \left[
	\prod_{\substack{j \in [s]\\ i\in [t]}}
	\alpha_{x(j,i)}^{(j,i)} {\alpha^*}^{{(j,i)}}_{\sigma(x)(j,i)}
	\right]
	=
	\bbE_{x\in \uuntt{s}} \left[
	\prod_{\substack{j \in [s]\\ i\in [t]}}
	\alpha_{x(j,i)}^{(j,i)}
	\prod_{\substack{j \in [s]\\ i\in [t]}}
	{\alpha^*}^{{(j,i)}}_{x(\sigma^{-1}(j,i))}
	\right]
	\\
	& =
	\bbE_{x\in \uuntt{s}} \left[
	\prod_{\substack{j \in [s]\\ i\in [t]}}
	\alpha_{x(j,i)}^{(j,i)}
	\prod_{\substack{j \in [s]\\ i\in [t]}}
	{\alpha^*}^{{\sigma((j,i))}}_{x(j,i)}
	\right]
	=
	\bbE_{x\in \uuntt{s}} \left[
	\prod_{\substack{j \in [s]\\ i\in [t]}}
	\alpha_{x(v)}^{(j_v)}
	{\alpha^*}^{(j_v')}_{x(v)}
	\right]
	\\&=\label{eq:sigmawithv}
	\bbE_{x\in \uuntt{s}} \left[
	\prod_{v \in \ncsv} \abs{\alsup{j_v}_{\vxsup{v}}}^2 \prod_{v \in \csv} \alsup{j_v}_{\vxsup{v}} \alssup{j'_v}_{\vxsup{v}} 
	\right]
	~.
\end{align}

\begin{proposition}
	If $\sigma \cong \sigma'$ then $\nu_\sigma = \nu_{\sigma'}$.
\end{proposition}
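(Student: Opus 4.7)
The plan is to start from the block form of $\nu_\sigma$ derived just above Eq.~\eqref{eq:sigmawithv},
\begin{align*}
	\nu_\sigma \;=\; \bbE_{x \in \uuntt{s}}\left[\, \prod_{v \in [s]\times[t]} \alsup{j_v}_{x(v)}\, {\alpha^*}^{(j'_v(\sigma))}_{x(v)}\, \right],
\end{align*}
where $j_v$ is the block of $v$ and $j'_v(\sigma)$ is the block of $\sigma(v)$, and to exploit two structural features simultaneously: the factors depend on $\sigma$ only through block labels, never through the coordinate inside a block; and the uniform distribution on $\uuntt{s}$ is invariant under the $S_t^s$-action that permutes the $t$ entries inside each block. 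Writing $\sigma' = \sigma_1 \sigma \sigma_2$ with $\sigma_1, \sigma_2 \in S_t^s$, the goal is to absorb $\sigma_1$ using the first feature and $\sigma_2$ using the second.

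Concretely, the steps I would carry out are: (i) since $\sigma_1 \in S_t^s$ preserves blocks, the block of $\sigma_1(\sigma(\sigma_2(v)))$ equals the block of $\sigma(\sigma_2(v))$, i.e.\ $j'_v(\sigma') = j'_{\sigma_2(v)}(\sigma)$, so $\sigma_1$ drops out of the product entirely; (ii) reindex the product over $v$ by $u = \sigma_2(v)$, and use $j_{\sigma_2^{-1}(u)} = j_u$ (again because $\sigma_2 \in S_t^s$) to obtain
\begin{align*}
	\prod_u \alsup{j_u}_{x(\sigma_2^{-1}(u))}\, {\alpha^*}^{(j'_u(\sigma))}_{x(\sigma_2^{-1}(u))};
\end{align*}
(iii) perform the change of variables $y \coloneqq \sigma_2^{-1}(x)$ in the expectation, meaning $y(u) = x(\sigma_2^{-1}(u))$. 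Since $\sigma_2$ is a bijection of $\uuntt{s}$ that preserves the uniform measure, the expectation over $y \in \uuntt{s}$ equals the expectation over $x \in \uuntt{s}$, and the resulting expression is exactly $\nu_\sigma$.

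The only thing requiring care is bookkeeping, namely distinguishing a permutation acting on vertex indices from its role in permuting values, and keeping track of the two separate uses of block preservation — once on the $\alpha^*$ side to make $\sigma_1$ invisible, and once on the $\alpha$ side together with measure-invariance of $\uuntt{s}$ to undo $\sigma_2$. There is no substantive technical obstacle: the proposition is precisely the statement that the two ``invisible'' directions of multiplication by $S_t^s$ are exactly the two symmetries built into the defining expression of $\nu_\sigma$, which is why the congruence class $\edgep_\sigma$ is the natural parameter on which the coefficient depends.
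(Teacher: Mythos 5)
Your proof is correct and follows essentially the same route as the paper's: both rest on the two facts that the coefficients $\alpha^{(j,i)}$ depend only on the block index $j$, and that the uniform distribution on $\uuntt{s}$ is invariant under coordinate permutations from $S_t^s$. Your version --- decomposing $\sigma' = \sigma_1\sigma\sigma_2$, absorbing $\sigma_1$ via block-only dependence of the $\alpha^*$ superscripts and $\sigma_2$ via a measure-preserving change of variables on $x$ --- is in fact a more explicit rendering of the paper's rather terse block-by-block invariance argument.
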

\begin{proof}
	Consider a permutation $\sigma$. Let us break the operand in the expectation in Eq.~\eqref{eq:sigmawithv} into blocks. Namely, for a fixed $j$ consider
	\begin{align}
		\prod_{\substack{v \in \ncsv \\ j_v = j}} \abs{\alsup{j_v}_{\vxsup{v}}}^2 \prod_{\substack{v \in \csv \\ j_v=j}} \alsup{j_v}_{\vxsup{v}} \alssup{j'_v}_{\vxsup{v}}
		~.
	\end{align}
	We notice that the expression above only depends on the number and block identities of the neighbors of the elements in the $j$'th block. Multiplying over all blocks we get $\nu_\sigma$ which remains invariant under (outer) permutations in $S_t^s$.
\end{proof}

We can therefore denote $\nu_{\edgep}$ which is the value corresponding to $\nu_\sigma$ for all $\sigma \in \edgep$. Define
\begin{align}
	A_{\edgep} & \coloneqq \sum_{z\in \uuntt{s}} \sum_{\sigma \in \edgep}  \ket{{z}}\bra{\sigma({z})}
	= \sum_{z\in \uuntt{s}} \ket{{z}} \sum_{\sigma \in \edgep}  \bra{\sigma({z})}~.
\end{align}

\begin{corollary}
	It holds that 
	\begin{align}
		A & = \sum_{\edgep} \nu_{\edgep} \sum_{z\in \uuntt{s}} \sum_{\sigma \in \edgep}  \ket{{z}}\bra{\sigma({z})}
		=   \sum_{\edgep} \nu_{\edgep} A_{\edgep}~,
	\end{align}
\end{corollary}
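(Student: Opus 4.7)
The plan is to derive the claimed identity directly from the preceding proposition by grouping the summation over $S_{st}$ according to the congruence classes defined just above. Recall that $A = \sum_{\sigma \in S_{st}} \nu_{\sigma} \sum_{z\in \uuntt{s}} \ket{z}\bra{\sigma(z)}$, so the whole content of the corollary is to exchange the sum over individual permutations for a sum over congruence classes, pulling out the common coefficient.

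First I would observe that the congruence relation $\sigma \cong \sigma'$ (having the same block edge pattern $\edgep$) is an equivalence relation on $S_{st}$, so we can partition $S_{st} = \bigsqcup_{\edgep} \edgep$ where $\edgep$ ranges over the block edge patterns that are actually realized by permutations in $S_{st}$. This lets me rewrite the outer sum as $\sum_{\sigma \in S_{st}} = \sum_{\edgep} \sum_{\sigma \in \edgep}$, yielding
\begin{align}
A = \sum_{\edgep} \sum_{\sigma \in \edgep} \nu_{\sigma} \sum_{z \in \uuntt{s}} \ket{z}\bra{\sigma(z)}~.
\end{align}

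Next I would invoke the proposition that $\sigma \cong \sigma'$ implies $\nu_\sigma = \nu_{\sigma'}$, which justifies writing $\nu_\sigma = \nu_{\edgep}$ for all $\sigma \in \edgep$. This factor is now independent of $\sigma$ within each inner sum, so I can pull it out to obtain
\begin{align}
A = \sum_{\edgep} \nu_{\edgep} \sum_{\sigma \in \edgep} \sum_{z \in \uuntt{s}} \ket{z}\bra{\sigma(z)}~.
\end{align}
Swapping the order of the two innermost sums matches the definition of $A_{\edgep} = \sum_{z \in \uuntt{s}} \sum_{\sigma \in \edgep} \ket{z}\bra{\sigma(z)}$, giving the second equality of the corollary.

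There is essentially no obstacle here: the statement is a bookkeeping consequence of the preceding proposition together with the definition of $A_{\edgep}$. The only thing to be careful about is making sure the partition is well-defined and exhaustive, i.e., that every $\sigma \in S_{st}$ has some block edge pattern and belongs to exactly one class, which is immediate from the definition of $\edgep_\sigma$.
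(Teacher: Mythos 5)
Your proof is correct and is exactly the (implicit) argument the paper relies on: partition $S_{st}$ into congruence classes, use the preceding proposition to replace $\nu_\sigma$ by the class-constant $\nu_\edgep$, and match the result to the definition of $A_\edgep$. The paper states this corollary without proof as an immediate consequence, so there is nothing to add.
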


We separate the analysis to bounding the norm of $A_\edgep$ according to the crossing number of $\edgep$, counting the number of congruence classes with a certain crossing number, and bounding $\nu_\edgep$ according to the crossing number of $\edgep$.

\begin{lemma}\label{lem:boundaep}
	Let $\edgep$ be with $\cnum=k$. It holds that
	\begin{align}
		\norm{A_\edgep}_1 \le  \Nst \cdot t^k
	\end{align}
\end{lemma}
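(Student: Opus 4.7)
The plan is to bound $\norm{A_\edgep}_1$ by $\mathrm{rank}(A_\edgep)\cdot\norm{A_\edgep}_{\mathrm{op}}$, and to control each factor via the two-sided $S_t^s$-symmetry of $\edgep$ together with a combinatorial count of the congruence class.

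First, decompose $A_\edgep = \sum_{\sigma\in\edgep} P_\sigma$ where $P_\sigma \coloneqq \sum_{z\in\uuntt{s}} \ket{z}\bra{\sigma(z)}$. Since $z$ has unique entries, each $P_\sigma$ is a partial isometry on $\mathrm{span}(\uuntt{s})$ with $P_\sigma P_\sigma^\dagger = P_\sigma^\dagger P_\sigma = \piu$, so $\norm{P_\sigma}_{\mathrm{op}} = 1$ and the triangle inequality gives $\norm{A_\edgep}_{\mathrm{op}} \le |\edgep|$.

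For the rank bound, I would use that $\edgep$ is, by definition, closed under both left- and right-multiplication by $S_t^s$. A direct reindexing then yields $U_\tau A_\edgep = A_\edgep U_\tau = A_\edgep$ for every $\tau\in S_t^s$. Averaging, $A_\edgep = \Pi A_\edgep \Pi$ where $\Pi$ is the projector onto the $S_t^s$-invariant subspace of $\mathrm{span}(\uuntt{s})$. Since the $S_t^s$-action on $\uuntt{s}$ is free (a vector with distinct entries has trivial stabilizer in $S_t^s$), $\mathrm{rank}(\Pi)$ equals the number of $S_t^s$-orbits in $\uuntt{s}$, namely $\Nst/(t!)^s$. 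Therefore $\mathrm{rank}(A_\edgep) \le \Nst/(t!)^s$, and combining with the operator norm bound gives
\begin{align*}
\norm{A_\edgep}_1 \;\le\; \mathrm{rank}(A_\edgep)\cdot\norm{A_\edgep}_{\mathrm{op}} \;\le\; \frac{|\edgep|\cdot\Nst}{(t!)^s}.
\end{align*}

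It remains to show $|\edgep|/(t!)^s \le t^k$. I would count $|\edgep|$ by specifying a permutation with pattern $(a_{j,j'})$ in two stages: (i) for each source block $j$, partition its $t$ positions into groups of sizes $a_{j,1},\ldots,a_{j,s}$ to be routed to the respective target blocks, contributing $\prod_j \binom{t}{a_{j,1},\ldots,a_{j,s}}$ choices; and (ii) pick a bijection from the incoming elements into the $t$ positions of each target block, contributing $(t!)^s$ choices. Thus $|\edgep| = (t!)^s \prod_j \binom{t}{a_{j,\cdot}}$. Writing $k_j \coloneqq t-a_{j,j}$ for the number of crossings leaving block $j$ (so $\sum_j k_j = k$), each multinomial is bounded by $t!/(t-k_j)! \le t^{k_j}$, yielding $\prod_j \binom{t}{a_{j,\cdot}} \le t^k$ and closing the proof. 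The only step that requires genuine care is the rank bound via two-sided $S_t^s$-invariance; the rest is elementary counting.
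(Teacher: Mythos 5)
Your proof is correct. It shares with the paper's argument the two essential ingredients---the two-sided $S_t^s$-invariance of the congruence class and the count $\Nst/(t!)^s$ of $S_t^s$-orbits in $\uuntt{s}$---but assembles them into a trace-norm bound differently. The paper decomposes $A_\edgep$ orbit by orbit into $\Nst/(t!)^s$ rank-one blocks $A_{\edgep,P} = \bigl(\sum_{z\in P}\ket{z}\bigr)\bigl(\sum_{\sigma\in\edgep}\bra{\sigma(z_0)}\bigr)$, each of trace norm $(t!)^{s/2}\sqrt{\abs{\edgep}}$, and applies the triangle inequality; you instead use $\norm{A_\edgep}_1\le\mathrm{rank}(A_\edgep)\cdot\norm{A_\edgep}_{\mathrm{op}}$, extracting the rank from the projector onto the $S_t^s$-invariant subspace (correctly using freeness of the action on unique tuples) and the operator norm from the triangle inequality over the partial isometries $P_\sigma$. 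Your operator-norm bound $\abs{\edgep}$ is cruder than the paper's $(t!)^{s/2}\sqrt{\abs{\edgep}}$ (note $\abs{\edgep}\ge (t!)^s$), but you compensate with a sharper count of the class: the paper over-counts each crossing edge by $t^2$ to get $\abs{\edgep}\le (t!)^s t^{2k}$, whereas your exact multinomial formula yields $\abs{\edgep}\le (t!)^s t^k$, and the two routes land on the same final bound $\Nst\cdot t^k$. One very minor point: stage (ii) of your count implicitly uses that each target block receives exactly $t$ incoming edges, i.e.\ $\sum_j a_{j,j'}=t$; this holds because the pattern is realized by an actual permutation (and if it were not realizable, $\edgep$ would be empty and the bound trivial), but it is worth stating.
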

\begin{proof}
	Partition the space $z\in \uuntt{s}$ into parts that are invariant under ``block-permutations'', i.e.\ under $S_t^s$. 
	By definition, each such set contains $(t!)^s$ different ${z}$ values (recall that all $z(j,i)$ are unique), and the number of partitions is $\frac{\Nst}{(t!)^s}$.
	
	For each partition $P$, define 
	\begin{align}
		A_{\edgep,P} = \sum_{{z}\in P}  \ket{{z}} \sum_{\sigma \in \edgep}\bra{\sigma({z})} ~.
	\end{align}
	For all $z\in P$, the vector $\sum_{\sigma \in \edgep}\bra{\sigma({z})}$ is the same, since the elements of $P$ all differ by a $\tilde{\sigma} \in S_t^s$ permutation, and $\edgep = \edgep \tilde{\sigma}$. Thus, $A_{\edgep,P}$ is a rank-$1$ matrix, and the norm $\norm{A_{\edgep,P}}_1$ is the product of the Euclidean norm of the two vectors. In our case,
	\begin{align}
		\norm{A_{\edgep,P}}_1 &= \sqrt{\abs{P}} \cdot \sqrt{\abs{\edgep}}\\
		& = (t!)^{s/2} \cdot \sqrt{\abs{\edgep}}~.
	\end{align}
	Therefore, by the triangle inequality we have that
	\begin{align} \label{eq:bound-ap}
		\norm{A_{\edgep}}_1 \le \frac{\Nst}{(t!)^s} \cdot (t!)^{s/2} \cdot \sqrt{\abs{\edgep}}~,
	\end{align}

	Next, we bound the cardinality of $\edgep$ when interpreted as a congruence class (namely, the number of permutations that have edge pattern $\edgep$):
	\begin{align} \label{eq:bound-pk}
		\abs{\edgep} \le 
		(t!)^s \cdot t^{2k}~.
	\end{align}
	We show this by over-counting the set of graphs with a given edge pattern and $\cnum=k$. We first consider all of the crossing edges, of which there are $k$ by definition. For each such edge, the edge pattern already specifies its source and destination blocks, so we need to choose its specific source and origin vertices within the blocks. There are at most $t^2$ options for each edge, and thus at most $t^{2k}$ options in general. Then, for all of the other edges, it just remains to go over each of the $s$ blocks of vertices in the graph, and organize the internal edges in the block. We note that any such arrangement can be completed into a permutation in $S_t$, by orienting the outgoing and incoming edges of the block towards each other arbitrarily. Therefore, the number of arrangements in each block is at most $\abs{S_t} = t!$. It follows that across all blocks, the total number of internal arrangements in bounded by $(t!)^s$. The lemma follows from equations \ref{eq:bound-ap} and \ref{eq:bound-pk}.
	
\end{proof}

\begin{lemma}\label{lem:boundpatterns}
	Denote
	\begin{align}
		\edgep_k = \left\{ \edgep : \cnum=k \right\} 
	\end{align}
	the number of congruence classes with crossing number $k$. Then $\abs{\edgep_k} \le s^{2k}$.
\end{lemma}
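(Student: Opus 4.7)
The plan is to recall that a congruence class is uniquely specified by its block edge pattern $\edgep \in \bbN^{[s]\times[s]}$, and that the crossing number $\cnum = k$ equals the sum of the off-diagonal entries $\sum_{j \neq j'} \edgep[j,j']$ (entries on the diagonal correspond to non-crossing vertices and can be set to $0$ in the pattern without loss of generality, since the total number of vertices per block is fixed at $t$). So the task reduces to counting the number of non-negative integer matrices on the off-diagonal of an $s \times s$ array whose entries sum to $k$.

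To bound this count I would use a straightforward over-counting argument. For each such pattern, construct an ordered $k$-tuple by listing, in an arbitrary canonical order, the $k$ crossing ``events'', each labeled by its (source block, destination block) pair $(j,j') \in [s]\times[s]$ with $j \neq j'$. Distinct patterns clearly yield distinct multisets of events, so the map from patterns to (any choice of) ordered tuples is injective into the set of ordered $k$-tuples of such pairs. The number of ordered tuples is $(s(s-1))^k \le s^{2k}$, which gives the desired bound $|\edgep_k| \le s^{2k}$.

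I do not anticipate any substantive obstacle here: this is a purely combinatorial counting step, and the slack between the tighter count $(s(s-1))^k$ (or the even tighter stars-and-bars count $\binom{s(s-1)+k-1}{k}$) and the stated bound $s^{2k}$ is more than enough. The bound is chosen to combine cleanly in the subsequent step, when it is multiplied against the $t^k$ factor from Lemma~\ref{lem:boundaep} and the decay factor $\delta^k$ for $\nu_\edgep$, yielding a convergent series in $k$.
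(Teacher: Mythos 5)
Your proof is correct and takes essentially the same over-counting approach as the paper: assign each of the $k$ crossing edges to an ordered (source, destination) block pair, and bound the number of choices by $s^2$ per edge. (You correctly count $s(s-1)$ ordered pairs; the paper writes $\binom{s}{2}$, which is an imprecision, but both are $\le s^2$ so the conclusion $|\edgep_k| \le s^{2k}$ is unaffected.)
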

\begin{proof}
	We over-count the elements in $\edgep_k$. Each edge should be assigned to one of $\binom{s}{2} \le s^2$ pairs of origin and destination blocks. Therefore, the total number of edge patterns is at most $s^{2k}$.
\end{proof}

We now bound 

\begin{lemma}\label{lem:boundnuedge}
	Let $\edgep$ be with crossing numeber $\cnum$. It holds that
	\begin{align}
		\abs{\nu_\edgep} \le (\Nst)^{-1} ((st)^2 \epsilon^2 N)^{\cnum/2}~.
	\end{align}
\end{lemma}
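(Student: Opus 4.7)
The starting point is the explicit formula
\begin{align}
\nu_\sigma = \frac{1}{\Nst}\sum_{x \in \uuntt{s}}\prod_v f_v(x(v)), \qquad f_v(y) := \alsup{j_v}_y \alssup{j'_v}_y,
\end{align}
together with two inputs inherited from earlier in the paper: orthogonality of the input vectors gives $\sum_y f_v(y) = \braket{\alsup{j'_v}|\alsup{j_v}} = 0$ whenever $v$ is a crossing vertex (and $=1$ for a non-crossing $v$), while $\epsilon$-flatness gives $|f_v(y)| \le \epsilon$ pointwise and hence $\sum_y |f_v(y)|^2 \le \epsilon$ for every~$v$.

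My first step is to factor the non-crossing variables out of the sum. For a distinct assignment $\vec a = (a_v)_{v\in\ncsv}$, setting $A = \{a_v : v \in \ncsv\}$ and defining
\begin{align}
T_{\vec a} := \sum_{\substack{x_v\ (v\in\csv)\\ \text{distinct, disjoint from } A}} \prod_{v\in\csv} f_v(x_v),
\end{align}
the inequality $\sum_{\vec a}\prod_{v\in\ncsv}|\alsup{j_v}_{a_v}|^2 \le 1$ reduces the problem to showing $\max_{\vec a}|T_{\vec a}| \le ((st)^2\epsilon^2 N)^{\cnum/2}$. To bound $T_{\vec a}$ I plan to integrate the crossing variables one at a time. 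For each crossing vertex $v$, orthogonality rewrites the ``fresh'' summation as $\sum_{x_v\text{ fresh}} f_v(x_v) = -\sum_{y\in A\cup\{x_{v'}\,:\,v'\text{ earlier}\}} f_v(y)$, so that every elimination step produces either a coefficient of magnitude at most $(st)\epsilon$ (from summing against the elements of $A$) or a ``merger'' that couples $f_v$ with some earlier $f_{v'}$ into a combined summand $\sum_y f_v(y)f_{v'}(y)$. Iterating, $T_{\vec a}$ decomposes as a sum of terms indexed by collision patterns, each a product of coefficients of magnitude at most $(st)\epsilon$ and merged sums of the form $\sum_y \prod_{v\in B} f_v(y)$, which Cauchy--Schwarz bounds by $\sqrt{N}\cdot\epsilon^{r/2}$ for blocks of size $r$ (using $|\sum_y g(y)|\le \sqrt{N}\,\|g\|_2$ and the factorwise estimate $\sum_y|f_v|^2\le\epsilon$).

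The main technical obstacle is balancing the per-pattern bound against the combinatorial multiplicity: the expansion generates very many collision patterns, and the claimed bound effectively charges each crossing vertex a factor of $(st)\epsilon\sqrt{N}$, so on average each crossing must be paired in order to contribute a $\sqrt{N}\cdot\epsilon$ term. The saving grace is that orthogonality kills every singleton-in-$\zon$ block, so only patterns in which every crossing vertex is paired (either with another crossing vertex or with an element of $A$) survive; the surviving patterns contribute at most $(st)^{\cnum-2m}\epsilon^{\cnum-2m}\cdot N^{m/2}\epsilon^{m}$ for $m$ internal mergers, and every such expression is bounded by $((st)\epsilon\sqrt{N})^\cnum$ after absorbing the pattern count into the per-vertex budget. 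Combining the resulting estimate $\max_{\vec a}|T_{\vec a}|\le((st)^2\epsilon^2 N)^{\cnum/2}$ with the $(\Nst)^{-1}$ prefactor and the non-crossing weight bound yields the stated inequality.
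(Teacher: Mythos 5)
Your strategy is essentially the paper's: expand $\nu_\sigma$ over distinct assignments, use orthogonality to replace each fresh crossing sum $\sum_{x_v} f_v(x_v)$ by minus the sum over the already-excluded values, iterate so that every crossing vertex is either absorbed into a coefficient of magnitude at most $(st)\epsilon$ or merged with other crossing variables, and finish with flatness. The paper packages exactly this as an induction on a class of expressions $\tau$ with a ``magnitude'' functional (free vs.\ loaded indices), rather than an explicit expansion over collision patterns, but the substance is the same, and your factoring-out of the non-crossing variables via $\sum_{\vec a}\prod_v|\alsup{j_v}_{a_v}|^2\le 1$ is a legitimate reorganization.

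There is, however, a quantitative error in the merged-block estimate that breaks your final accounting. You charge each merged pair $\sqrt{N}\,\epsilon$ (the $N^{m/2}\epsilon^m$ factor), obtained from $|\sum_y g(y)|\le\sqrt{N}\,\|g\|_2$. This bound is valid but too lossy: with $m$ mergers covering $2m$ of the $k=\cnum$ crossing vertices, your per-pattern contribution is $((st)\epsilon)^{k-2m}(\sqrt{N}\epsilon)^m$, which at $m=k/2$ equals $N^{k/4}\epsilon^{k/2}$, while the target $((st)\epsilon\sqrt{N})^{k}=(st)^k\epsilon^k N^{k/2}$ is \emph{smaller} by a factor of roughly $\bigl((st)^2\epsilon\sqrt{N}\bigr)^{k/2}=\bigl(O((st)^2 n/\sqrt{N})\bigr)^{k/2}\ll 1$ for $\epsilon=\Theta(n/N)$. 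So the claim that every surviving pattern is bounded by $((st)\epsilon\sqrt{N})^{k}$ is false as written. The fix is to use the direct Cauchy--Schwarz bound $\bigl|\sum_y f_v(y)f_{v'}(y)\bigr|\le\|f_v\|_2\|f_{v'}\|_2\le\epsilon$ for a merged pair, and more generally $\epsilon^{r-1}$ for an $r$-vertex block (using $|f_v(y)|\le\epsilon$ pointwise); i.e.\ replace $N^{m/2}\epsilon^m$ by $\epsilon^m$. The comparison then reduces to $1\le (st)^{2m}\epsilon^m N^{k/2}$, which holds because $m\le k/2$ and $\epsilon N\ge 1$ --- the same hidden condition the paper's induction uses when it bounds $\max\{\epsilon^2 N,\epsilon,\epsilon(st)\}$ by $\epsilon^2 N(st)$. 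With that correction (and noting, as you do, that the pattern count is at most $\poly(st)^k$ and is absorbed into the per-vertex budget), your argument closes.
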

\begin{proof}
	Consider some $\sigma \in \edgep$, and recall that
	\begin{align}
		\nu_\sigma  &= 
		\bbE_{x\in \uuntt{s}} \left[
		\prod_{v \in \ncsv} \abs{\alsup{j_v}_{\vxsup{v}}}^2 \prod_{v \in        \csv} \alsup{j_v}_{\vxsup{v}} \alssup{j'_v}_{\vxsup{v}} 
		\right] =
		(\Nst)^{-1} \sum_{x\in \uuntt{s}} 
		\left[
		\prod_{v \in \ncsv} \abs{\alsup{j_v}_{\vxsup{v}}}^2 \prod_{v \in \csv} \alsup{j_v}_{\vxsup{v}} \alssup{j'_v}_{\vxsup{v}}
		\right]
		~.
	\end{align}
	It is possible to sum over the $x$ elements as follows. Go over all $v$ at arbitrary order, and for each $v$, let $x(v)$ run over all elements in $\binset^n$ that were not selected in the previous $v$'s, for each such value of $x(v)$ continue to pick value for the next $v$ in the order.

	In order to analyze this expression, we consider a more general expression as follows:
	\begin{align}\label{eq:tauformgen}
		\tau = \sum_{y_1,\ldots, y_m} \prod_{i\in[m]} \abs{\alsup{j_i}_{y_i}}^2 
		\delta(\alpha_{y_1},\ldots, \alpha_{y_m}) 
		\sum_{z_1} M_1(\alpha_{z_1}) \cdots \sum_{z_\ell} M_\ell(\alpha_{z_\ell})~,
	\end{align}	
	where the indices $y_i$ and $z_i$ run over all of $\binset^n$ except 
	for the preceding values of the indices. That is, the $y_1,\ldots, y_m$ values are first chosen to be distinct, and then each $z_i$ is chosen in order to be distinct from $y_1,\ldots, y_m$ and all preceding $z_i$.
	
	The function $\delta(\cdot)$ can be an arbitrary polynomial, and the functions $M_i$ are monomials, where $\delta$ and $M_i$ can act on the set of their operands and their complex conjugates. We only consider setting where the total degree of $M_i$ is even. We note that we can always reorder the $z_i$'s without effecting the total value of the expression (maintaining the convention that ``later'' $z_i$'s take all values except those of $y_i$'s and ``previous'' $z_i$'s and).
	
	We let $d_i$ denote the total degree of $M_i$. We say that an index $z_i$ is \emph{loaded} if $d_i \ge 4$, otherwise we say that it is \emph{free} (by our convention this means that $d_i=2$). As a convention, we always order the summation so that the loaded indices are enumerated on before the free ones.
	We let $\ell'$ denote the number of loaded indices. Furthermore, in our setting, any free term $i$ is going to be of the form $M_i(\alpha_{z_i}) = \alsup{j}_{z_i} \alssup{j'}_{z_i}$, or its complex conjugate, for some $j \neq j'$.
	
	We define the \emph{magnitude} of $\tau$ as follows, letting $\delta_0$ be the maximal value of $\delta$ over all possible inputs that it can take:
	\begin{align}\label{eq:magtaubound}
		\magn(\tau) &= \delta_0 \prod_{i \in [\ell']} (\epsilon^{d_i/2}N)\\
		&= \delta_0 N^{\ell'} \epsilon^{(\sum_i d_i)/2}~.
	\end{align}
	We let $d = \sum_{i\in[\ell']} d_i$ denote the total degree of all loaded elements.

	We now recall that $\ket{\alsup{j}}, \ket{\alsup{j'}}$ are orthogonal for $j \neq j'$, and therefore $\sum_{z \in \binset^n} \alsup{j}_{z} \alssup{j'}_{z} = 0$. It therefore follows that if $\tau$ has any free indices, i.e.\ by our convention if its $\ell$ index is free, then we have that (up to complex conjugation)
	\begin{align}
		\sum_{z_\ell} M_{\ell}(\alpha_{z_\ell})
		=
		\sum_{z_\ell} \alsup{j}_{z_\ell}\alssup{j'}_{z_\ell}
		=
		0 - \sum_{i < \ell} \alsup{j}_{z_i}\alssup{j'}_{z_i} + \delta_\ell(\alpha_{y_1},\ldots \alpha_{y_m}) ~.
	\end{align} 
	where $\abs{\delta_\ell} \le \epsilon \cdot (st)$ from $\epsilon$ flatness.
	
	Each term of the form $\alsup{j}_{z_i}\alssup{j'}_{z_i}$ now ``joins'' $M_i$ and so it either creates a new loaded term, if $z_i$ was not previously loaded,  decreasing the value of $\magn{}$ by $\epsilon^2 N$, or adds $2$ to the degree of a pre-existing loaded term if $z_i$ was previously loaded, decreasing the value of $\magn{}$ by $\epsilon$. Furthermore, multiplying by $\delta_\ell$ would decrease the value of the ``global'' delta by a factor of $\epsilon \cdot (st)$. Therefore, we can write $\tau$ as a sum of $\ell$ terms, each of which conforms with the general form of Eq.~\eqref{eq:tauformgen}, but with only $\ell-1$ indices (rather than $\ell$), namely:
	\begin{align}
		\tau = \sum_{i \in [\ell]} \tau_i~,
	\end{align}
	and it holds that
	\begin{align}\label{eq:magtauigen}
		\magn(\tau_i) \le \max\{\epsilon^2 N, \epsilon, \epsilon \cdot (st)\} \magn(\tau) \le (\epsilon^2 N (st)) \cdot \magn(\tau)~.
	\end{align}
	
	We can now prove the following inductive claim:
	\begin{claim}\label{claim:boundtgen}
		Let $\tau$ be with parameters $m, \ell, \ell'$ as above, and assume $(\epsilon^2 N (st) \ell) < 1$ then it holds that
		\begin{align}
			\abs{\tau} \le (\epsilon^2 N (st) \ell)^{\frac{\ell-\ell'}{2}} \cdot \magn(\tau)~.
		\end{align}
	\end{claim}
	\begin{claimproof}
		We prove this inductively over the value of $\ell-\ell'$.
		For the base case, consider the setting where $\ell' = \ell$. Notice that $\sum_{y_1,\ldots,y_{m}} \prod_{i\in[m]} \abs{\alsup{j_i}_{y_i}}^2  \le 1$. Therefore, $\tau$ is a (sub) convex combination of elements of the form $\delta(\alpha_{y_1}, \ldots, \alpha_{y_m}) \sum_{z_1} M_1(\alpha_{z_1}) \cdots \sum_{z_\ell} M_\ell(\alpha_{z_\ell})$, that are each bounded in absolute value by  $\delta_0 N^{\ell'} \epsilon^{d/2}$, which we show below. It follows that if $\tau$ has no free terms, then $\abs{\tau} \le \magn(\tau)$, where $\magn(\tau)$ is given by Eq.~\eqref{eq:magtaubound}.
		
		Indeed, each such element is a product of $\delta$, times a product of $\ell'$ loaded sums. The total number of summands over all the sums is at most $N^{\ell'}$ (as $l' = l$). Each element in the sum is a product of $d_i$ elements from $\alpha$. By flatness, each element of $\alpha$ has absolute value at most $\sqrt{\epsilon}$, and therefore each element in the sum has absolute value at most $\epsilon^{d_i/2}$. We get a value that is bounded by $\delta_0 N^{\ell'} \epsilon^{d/2}$ as required.
		
		Now consider the case where $\ell > \ell'$. In this case, we can write $\tau = \sum_{i \in [\ell]} \tau_i$ as above. We notice that for each $\tau_i$, we have $\ell_i = \ell-1$, and $\ell'_i \le \ell'+1$, so $\ell-\ell'$ shrinks by at most $2$. Therefore we get the bound 
		\begin{align}
			\abs{\tau} &\le \sum_{i \in [\ell]} \abs{\tau_i} & 
			\\  & \le \sum_i (\epsilon^2 N (st) \ell_i)^{\frac{\ell_i-\ell'_i}{2}} \cdot \magn(\tau_i) & \text{\tiny (induction)}
			\\  & \le \sum_i (\epsilon^2 N (st) \ell)^{\frac{\ell-\ell'}{2}-1} \cdot \magn(\tau_i) & \text{\tiny ($l_i \le l$, ${\ell_i-\ell'_i} \ge {\ell-\ell'}-2$)}
			\\  & \le \ell \cdot (\epsilon^2 N (st) \ell)^{\frac{\ell-\ell'}{2}-1} \cdot (\epsilon^2 N (st)) \cdot \magn(\tau) & \text{\tiny (Eq.~\eqref{eq:magtauigen})}
			\\  & \le  (\epsilon^2 N (st) \ell)^{\frac{\ell-\ell'}{2}} \cdot \magn(\tau) & 
		\end{align}
		and the claim thus follows.	
	\end{claimproof}

	Now, let us go back to our expression for $\nu_\sigma$. We can write it as $\nu_\sigma = (\Nst)^{-1}\tau$, where $\tau$ has the form as above, with $\delta=1$, $\ell = |\csv| \le st$, and $\ell'=0$, thus $\magn(\tau)=1$. Claim~\ref{claim:boundtgen} therefore guarantees that
	\begin{align}
		\abs{\tau} \le (\epsilon^2 N (st)^2)^{\cnum/2}~,
	\end{align}
	and the bound for $\nu_\sigma$ thus follows.
\end{proof}

\begin{corollary}
	Let $\gamma = \epsilon^2 N (st)^2$, and assume ${t^2 s^4\gamma} < 1/4$, then it holds that
	\begin{align}
		\sum_{\edgep : \cnum  > 0} \norm{\nu_{\edgep} A_{\edgep}}_1 \le 2 t^2 s^4 \gamma = 2t^4 s^6 \epsilon^2 N
	\end{align}
\end{corollary}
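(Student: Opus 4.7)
The plan is to bound the total trace norm by grouping contributions according to the crossing number $k = \cnum$, applying the triangle inequality for the trace norm, and then summing a geometric series. The three preceding lemmas (Lemmas~\ref{lem:boundaep}, \ref{lem:boundpatterns}, and~\ref{lem:boundnuedge}) supply exactly the three multiplicative ingredients I need, so the work reduces essentially to careful accounting.

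First I would combine the per-class bounds. For a fixed $\edgep$ with $\cnum = k$, Lemma~\ref{lem:boundnuedge} gives $|\nu_\edgep| \le (\Nst)^{-1} \gamma^{k/2}$ after substituting $\gamma = \epsilon^2 N (st)^2$, and Lemma~\ref{lem:boundaep} gives $\norm{A_\edgep}_1 \le \Nst \cdot t^k$. Multiplying these yields
$$
\norm{\nu_\edgep A_\edgep}_1 \;\le\; (t \sqrt{\gamma})^k.
$$
Summing over all congruence classes in $\edgep_k$ and using Lemma~\ref{lem:boundpatterns} ($|\edgep_k| \le s^{2k}$) gives
$$
\sum_{\edgep \in \edgep_k} \norm{\nu_\edgep A_\edgep}_1 \;\le\; s^{2k} \cdot (t \sqrt{\gamma})^k \;=\; (s^2 t \sqrt{\gamma})^k.
$$

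Next I would apply the triangle inequality over all $k \ge 1$ and observe that $k=1$ is in fact impossible by a flow-balance argument: since $\sigma \in S_{st}$ is a bijection on $[s]\times[t]$, for every block $j$ the number of crossing edges leaving block $j$ equals the number entering it, so a single isolated crossing edge cannot occur. Therefore the sum effectively starts at $k=2$:
$$
\sum_{\edgep : \cnum > 0} \norm{\nu_\edgep A_\edgep}_1 \;\le\; \sum_{k \ge 2} (s^2 t \sqrt{\gamma})^k.
$$

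Finally, setting $r = s^2 t \sqrt{\gamma}$, the assumption $t^2 s^4 \gamma < 1/4$ is exactly $r^2 < 1/4$, so $r < 1/2$, and the geometric series is bounded by
$$
\sum_{k \ge 2} r^k \;=\; \frac{r^2}{1-r} \;\le\; 2 r^2 \;=\; 2 t^2 s^4 \gamma,
$$
which is the claimed bound; the final equality $2 t^2 s^4 \gamma = 2 t^4 s^6 \epsilon^2 N$ is just the definition of $\gamma$. The main obstacle, such as it is, is justifying the $k=1$ exclusion cleanly—without it one only obtains a bound of order $r$ rather than $r^2$, which would not match the stated form; beyond that the argument is purely arithmetic bookkeeping of the three prior lemmas.
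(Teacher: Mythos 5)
Your proof is correct and follows essentially the same route as the paper: group congruence classes by crossing number, apply the triangle inequality, multiply the three lemma bounds to get a per-$k$ contribution of $(s^2 t\sqrt{\gamma})^k$, exclude $k=1$ by the flow-balance observation that the number of crossing edges leaving any block equals the number entering it, and bound the resulting geometric series starting at $k=2$ using $r = s^2 t\sqrt{\gamma} < 1/2$. No meaningful differences from the paper's argument.
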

\begin{proof}
	We derive the bound in the following equation. We note that $\cnum$ cannot be equal to $1$ since for every outgoing edge from a block there needs to be an incoming edge. We also recall the notation of $\edgep_k = \{\edgep : \cnum=k \}$, introduced in Lemma~\ref{lem:boundpatterns} (namely, $\edgep_k$ is a set whose elements are congruence classes $\edgep$).
	\begin{align}
		\sum_{\edgep : \cnum  > 0} \norm{\nu_{\edgep} A_{\edgep}}_1 &= \sum_{k=2}^{st} \sum_{\edgep \in \edgep_k} \abs{\nu_{\edgep}} \norm{A_{\edgep}}_1 &
		\\
		&\le \sum_{k=2}^{st} \sum_{\edgep \in \edgep_k} \gamma^{k/2} t^k & \text{\tiny Lemmas~\ref{lem:boundnuedge} and~\ref{lem:boundaep}} 
		\\
		&\le \sum_{k=2}^{st} \gamma^{k/2} t^k s^{2k}  & \text{\tiny Lemma~\ref{lem:boundpatterns}}
		\\&= \sum_{k=2}^{st} (t s^2 \sqrt{\gamma})^{k}  & 
		\\& \le \frac{(t s^2 \sqrt{\gamma})^2}{1- t s^2 \sqrt{\gamma}}
		\\& \le 2 t^2 s^4 \gamma
	\end{align}
	and the lemma follows.
\end{proof}

We can now prove lemma \ref{lem:state-to-almost-invariant}
\begin{proof} [Proof of lemma \ref{lem:state-to-almost-invariant}]
	Using the triangle inequality,
	\begin{align}
		\|  \rho - \rho_\unist \|_1 
		\le 
		\|  \rho - \rho^{\star} \|_1 
		+
		\left\| \rho^{\star} - c_1 \cdot \sum_{\edgep:|\csv| = 0}  \nu_\edgep A_\edgep \right\|_1 
		+ \left\| c_1 \cdot \sum_{\edgep:|\csv| = 0}  \nu_\edgep A_\edgep - \rho_\unist \right\|_1
		~.
	\end{align}
	We bound each term separately.
	\begin{align}
		\| \rho^\star - c_1 \cdot \sum_{\edgep:|\csv| = 0}  \nu_\edgep A_\edgep \|_1
		&= 
		\| c_1 \cdot \sum_{\edgep:|\csv| > 0}  \nu_\edgep A_\edgep \|_1
		\\ &\le
		c_1 \cdot \sum_{\edgep:|\csv| > 0} \| \nu_\edgep A_\edgep \|_1
		\\& \le c_1 \cdot 2 t^2 s^4 \gamma
		\\ &=
		c_1 \cdot 2 \epsilon^2 N s^6 t^4
	\end{align}
	
	We note that there is one congruence class with zero crossing, which contains the permutations $\sigma \in S_t^s$. Denote this congruence class by $\edgep_0$, so $\sum_{\edgep:|\csv| = 0}  A_\edgep = A_{\edgep_0}$. Recall that 
	\begin{align}
		\rho_\unist = \frac{1}{\Nst} \sum_{\substack{z\in \uuntt{s} \\ \sigma \in S_t^s}} \ket{z} \bra{\sigma(z)}
		=
		\frac{1}{\Nst} A_{\edgep_0}
	\end{align}
	As $\tr(\rho^{\star}) = 1$, we have
	\begin{align}
		\left| 1 -  c_1 \cdot  \nu_{\edgep_0} \Nst \right|
		=
		\left| 1 - \tr \left( c_1 \cdot  \nu_{\edgep_0} A_{\edgep_0} \right) \right| \le c_1 \cdot 2 \epsilon^2 N s^6 t^4
		~.
	\end{align}
	It follows that 
	\begin{align}
		\frac{1-c_1 \cdot 2 \epsilon^2 N s^6 t^4}{c_1}
		\frac{1}{\Nst}
		\le
		\nu_{\edgep_0} 
		\le
		\frac{1+ c_1 \cdot 2 \epsilon^2 N s^6 t^4}{c_1}
		\frac{1}{\Nst}
	\end{align}
	and so,
	\begin{align}
		\left\| c_1 \cdot \sum_{\edgep:|\csv| = 0}  \nu_\edgep A_\edgep - \rho_\unist \right\|_1 
		=
		\left\| c_1 \cdot \nu_{\edgep_0} A_{\edgep_0} - \frac{1}{\Nst} A_{\edgep_0} \right\|_1 
		\le
		c_1 \cdot 2 \epsilon^2 N s^6 t^4
	\end{align}
	and it follows that
	\begin{align}
		\|  \rho^{\star} - \rho_\unist \|_1 \le O(\epsilon^2 N s^6 t^4)
	\end{align}
\end{proof}

\subsection{Proving the main lemma and theorem}
We combine the results of sections \ref{subsection:Achieving-flatness} and \ref{subsection:flat-to-random-looking} to prove the main lemma.

\begin{proof} [Proof of lemma \ref{lemma:info-theoretic-bound}]
	Let $s, t$ be polynomials in $n$ and let $\{\ket{\psi^{(j)}}\}_{j\in [s]}$ be orthogonal quantum states.
	From corollary \ref{col:poly-flat}, choosing $c=8$ we get that $\forall_{j \in [s]} H^{\otimes n}U_f\ket{\psi^{(j)}}$ is $\frac{8n}{N}$-flat with probability at least $1-s\cdot 2\exp\left(-\left(\frac{8}{4}-\ln(2)\right)n\right) \ge 1-2^{-n} = 1-\frac{1}{N}$.
	
	Assuming flatness of these states, we can now use lemma \ref{lem:state-to-almost-invariant} and get that:
	\begin{align}
		\left\|
		\Ex_{f, g, \pi}
		\left[ {U_{f, g, \pi}}^{\otimes st}  
		\left(
		\bigotimes_{j \in [s]} (\ketbra{\psi^{(j)}})^{\otimes t}
		\right)
		{U_{f, g, \pi}^\dagger}^{\otimes st} 
		\right]
		-
		\rho_\unist
		\right\|_1 &\le 
		O\left(\frac{(st)^2 8n}{N} + N s^6 t^4 \left(\frac{9n}{N}\right)^2 + \frac{1}{N} \right)
		\\&
		=O\left(\frac{s^6 t^4 n^2}{N}\right)
		~.
	\end{align}
\end{proof}

We conclude with a proof for theorem \ref{thm:real-naoipru}.

\begin{proof}  [Proof of theorem \ref{thm:real-naoipru}]
	Taking $Gen_n(k) = U_k$ to be $U_{F, G, P}$ (where $k$ is split into keys for $F,G$ and $P$), we get that it is indeed a QPT algorithm on $n$ qubits. We now prove the security requirement.
	
	Recall that $\rho_{in}$ is promised to be of the form
	\begin{align}
		\rho_{in} =
		\sum_{a \in A}
		p_a 
		\left(
		\rho_a \otimes
		\left(
		\bigotimes_{j \in [s]} (\ketbra{\psi^{(j, a)}})^{\otimes t}
		\right)
		\right)
	\end{align}
	for orthogonal sets of states $\left\{ \ket{\psi^{(1, a)}}, \ldots , \ket{\psi^{(s, a)}} \right\}$. Define the channel $\Phi$ to be 
 \begin{align}
      \Phi(\rho) = \Ex_{f, g, \pi}
	\left[ \left(I_\ell \otimes {U_{f, g, \pi}}^{\otimes st} \right)
	\rho 
	\left(I_\ell \otimes {U_{f, g, \pi}^\dagger}^{\otimes st} \right)
	\right]
 ~.
  \end{align}
	Performing $\Phi$ on $\rho_{in}$ results in the state
	\begin{align}
		\Phi(\rho_{in}) = 
		\sum_{a \in A}
		p_a 
		\left(
		\rho_a \otimes
		\Ex_{f, g, \pi}
		\left[
		{U_{f, g, \pi}}^{\otimes st}
		\left(
		\bigotimes_{j \in [s]} (\ketbra{\psi^{(j, a)}})^{\otimes t}
		\right)
		{U_{f, g, \pi}^\dagger}^{\otimes st}
		\right]
		\right) ~.
	\end{align}
	By lemma \ref{lemma:info-theoretic-bound} and the fact that $\sum_{a\in A} p_a = 1$ we get that
	\begin{align}
		\left\| \Phi(\rho_{in}) - \sum_{a \in A} p_a (\rho_a \otimes \rho_\unist) \right\|_1 \le O\left(\frac{s^6 t^4 n^2}{N}\right) ~.
	\end{align}
	
	Together with lemma \ref{lem:almost-inv-instance}, we get by the triangle inequality that $\Phi(\rho_{in})$ is an $O(s^6 t^4 n^2 /N + s^2t^2/N) = O(s^6 t^4 n^2 /N)$ almost invariant state.
	Using claim \ref{claim:channel-almost-inv} we get 
	\begin{align} \label{eq:almost-inv-result}
		\TD \left(
		\Phi(\rho_{in}),
		\Ex_{U\leftarrow Haar_{n}} \left[ ( I_\ell \otimes U^{\otimes st}) \rho_{in} (I_\ell \otimes (U^\dagger)^{\otimes st}) \right]
		\right)
		\le 
		O(s^6 t^4 n^2 /N)~.
	\end{align}
	By claim \ref{claim:pseudo-to-random-ind} we have 
	\begin{multline} \label{eq:FGP-fgp-negl}
		\bigg|
		\Pr_{k} \left[ \adv \left( 
		\left( I_\ell \otimes {U_{F, G, P}}^{\otimes st} \right)  \rho_{in} 
		\left( I_\ell \otimes {U_{F, G, P}^\dagger}^{\otimes st} \right)
		\right)
		=1 \right]
		\\-
		\Pr_{f, g, \pi} \left[ \adv \left(
		\left( I_\ell \otimes {U_{f, g, \pi}}^{\otimes st} \right)  \rho_{in} 
		\left( I_\ell \otimes {U_{f, g, \pi}^\dagger}^{\otimes st} \right)
		\right)
		=1 \right]
		\bigg| \le \negl(n) ~.
	\end{multline}
	We finish by combining equations \ref{eq:FGP-fgp-negl} and \ref{eq:almost-inv-result} to get
	\begin{multline}
		\bigg|
		\Pr_{k} \left[ \adv \left( 
		\left( I_\ell \otimes {U_{F, G, P}}^{\otimes st}  \right)
		\rho_{in} 
		\left( I_\ell \otimes {U_{F, G, P}^\dagger}^{\otimes st} 
		\right)
		\right)
		=1 \right]
		\\-
		\Pr_{U\leftarrow Haar_{n}} \left[ \adv \left( (I_\ell \otimes {U}^{\otimes st})  \rho_{in} 
		(I_\ell \otimes {{U}^\dagger}^{\otimes st}) \right)
		=1 \right]
		\bigg|
		\le \negl(n) + O(s^6 t^4 n^2 /N) = \negl(n) ~,
	\end{multline}
	as needed to satisfy the security definition \ref{def:naoipru}.
	
\end{proof}

\bibliographystyle{alpha}
\bibliography{Bibliography}

\newcommand{\etalchar}[1]{$^{#1}$}
\begin{thebibliography}{AGQY22}

\bibitem[AGKL23]{ananth2023pseudorandom_isometries}
Prabhanjan Ananth, Aditya Gulati, Fatih Kaleoglu, and Yao-Ting Lin.
\newblock Pseudorandom isometries.
\newblock {\em arXiv preprint arXiv:2311.02901}, 2023.

\bibitem[AGQY22]{ananth23prfs}
Prabhanjan Ananth, Aditya Gulati, Luowen Qian, and Henry Yuen.
\newblock Pseudorandom (function-like) quantum state generators: New definitions and applications.
\newblock In {\em Proceedings of the Theory of Cryptography Conference (TCC)}, pages 237--265. Springer, 2022.

\bibitem[AMR20]{alagic2020efficient}
Gorjan Alagic, Christian Majenz, and Alexander Russell.
\newblock Efficient simulation of random states and random unitaries.
\newblock In {\em Advances in Cryptology--EUROCRYPT 2020: 39th Annual International Conference on the Theory and Applications of Cryptographic Techniques, Zagreb, Croatia, May 10--14, 2020, Proceedings, Part III 39}, pages 759--787. Springer, 2020.

\bibitem[BBSS23]{behera2023pseudorandomness}
Amit Behera, Zvika Brakerski, Or~Sattath, and Omri Shmueli.
\newblock Pseudorandomness with proof of destruction and applications.
\newblock {\em Cryptology ePrint Archive}, 2023.

\bibitem[BCQ23]{brakerski2022computational}
Zvika Brakerski, Ran Canetti, and Luowen Qian.
\newblock On the computational hardness needed for quantum cryptography.
\newblock In {\em Proceedings of the 14th Innovations in Theoretical Computer Science Conference (ITCS)}, 2023.

\bibitem[BS19]{brakerski2019pseudo}
Zvika Brakerski and Omri Shmueli.
\newblock (pseudo) random quantum states with binary phase.
\newblock In {\em Proceedings of the Theory of Cryptography Conference (TCC)}, pages 229--250. Springer, 2019.

\bibitem[BS20]{brakerski2020scalable}
Zvika Brakerski and Omri Shmueli.
\newblock Scalable pseudorandom quantum states.
\newblock In {\em Proceedings of the 40th Annual International Cryptology Conference (CRYPTO)}, pages 417--440. Springer, 2020.

\bibitem[GGM86]{goldreich1986construct_prf}
Oded Goldreich, Shafi Goldwasser, and Silvio Micali.
\newblock How to construct random functions.
\newblock {\em Journal of the ACM (JACM)}, 33(4):792--807, 1986.

\bibitem[GTB23]{giurgica2023pseudorandomness}
Tudor Giurgica-Tiron and Adam Bouland.
\newblock Pseudorandomness from subset states.
\newblock {\em arXiv preprint arXiv:2312.09206}, 2023.

\bibitem[HBK23]{haug2023pseudorandom}
Tobias Haug, Kishor Bharti, and Dax~Enshan Koh.
\newblock Pseudorandom unitaries are neither real nor sparse nor noise-robust.
\newblock {\em arXiv preprint arXiv:2306.11677}, 2023.

\bibitem[HILL99]{haastad1999pseudorandom}
Johan H{\aa}stad, Russell Impagliazzo, Leonid~A Levin, and Michael Luby.
\newblock A pseudorandom generator from any one-way function.
\newblock {\em SIAM Journal on Computing}, 28(4):1364--1396, 1999.

\bibitem[JLS18]{ji2018pseudorandom}
Zhengfeng Ji, Yi-Kai Liu, and Fang Song.
\newblock Pseudorandom quantum states.
\newblock In {\em Proceedings of the 38th Annual International Cryptology Conference (CRYPTO)}, pages 126--152. Springer, 2018.

\bibitem[JMW23]{jeronimo2023subset}
Fernando~Granha Jeronimo, Nir Magrafta, and Pei Wu.
\newblock Subset states and pseudorandom states.
\newblock {\em arXiv preprint arXiv:2312.15285}, 2023.

\bibitem[LQS{\etalchar{+}}23]{lu2023prss}
Chuhan Lu, Minglong Qin, Fang Song, Penghui Yao, and Mingnan Zhao.
\newblock Quantum pseudorandom scramblers.
\newblock {\em arXiv preprint arXiv:2309.08941}, 2023.

\bibitem[LR88]{luby1988construct_prp}
Michael Luby and Charles Rackoff.
\newblock How to construct pseudorandom permutations from pseudorandom functions.
\newblock {\em SIAM Journal on Computing}, 17(2):373--386, 1988.

\bibitem[MPSY24]{metger2024pseudorandom}
Tony Metger, Alexander Poremba, Makrand Sinha, and Henry Yuen.
\newblock Pseudorandom unitaries with non-adaptive security.
\newblock {\em arXiv preprint arXiv:2402.14803}, 2024.

\bibitem[Zha12]{zhandry2012construct}
Mark Zhandry.
\newblock How to construct quantum random functions.
\newblock In {\em Proceedings of the IEEE 53rd Annual Symposium on Foundations of Computer Science (FOCS)}, pages 679--687. IEEE, 2012.

\bibitem[Zha16]{zhandry2016note}
Mark Zhandry.
\newblock A note on quantum-secure prps.
\newblock {\em arXiv preprint arXiv:1611.05564}, 2016.

\end{thebibliography}

\end{document}